\declaretheorem[name=Theorem, parent=section]{theorem}
\declaretheorem[name=Corollary, sibling=theorem]{corollary}
\declaretheorem[name=Lemma, sibling=theorem]{lemma}
\declaretheorem[name=Definition, sibling=theorem, style=definition]{definition}
\declaretheorem[name=Remark, sibling=theorem, style=definition]{remark}
\crefname{theorem}{Theorem}{Theorems}
\crefname{lemma}{Lemma}{Lemmas}
\crefname{proposition}{Proposition}{Propositions}
\crefname{corollary}{Corollary}{Corollaries}
\crefname{conjecture}{Conjecture}{Conjectures}
\crefname{observation}{Observation}{Observations}
\crefname{fact}{Fact}{Facts}
\crefname{definition}{Definition}{Definitions}
\crefname{remark}{Remark}{Remarks}
\crefname{problem}{Problem}{Problems}
\crefname{example}{Example}{Examples}
\crefname{question}{Question}{Questions}
\crefname{claim}{Claim}{Claims}
\crefname{section}{Section}{Sections}
\crefname{subsection}{Section}{Sections}
\crefname{subsubsection}{Section}{Sections}
\DeclareMathOperator{\subrank}{Q}
\DeclareMathAccent{\wtilde}{\mathord}{largesymbols}{"65}
\DeclareMathOperator{\asympsubrank}{\underaccent{\wtilde}{Q}}
\DeclareMathOperator{\asymprank}{\underaccent{\wtilde}{R}}
\DeclareMathOperator{\tensorrank}{R}
\newcommand{\FF}{\mathbb{F}}
\newcommand{\RR}{\mathbb{R}}
\newcommand{\CC}{\mathbb{C}}
\newcommand{\NN}{\mathbb{N}}
\newcommand{\ZZ}{\mathbb{Z}}
\newcommand{\eps}{\varepsilon}
\newcommand{\field}{\FF}
\newcommand{\kron}{\boxtimes}
\def\<#1>{\left\langle\ignorespaces#1\unskip\right\rangle}
\newcommand{\asympleq}{\lesssim}
\DeclareMathOperator*{\bigkron}{\raisebox{-0.75ex}{\scaleobj{1.5}{\kron}}}
\newcommand{\func}{F}
\newcommand{\asympfunc}{\underaccent{\wtilde}{\func}}
\DeclareMathOperator{\GL}{GL}
\DeclareMathOperator{\linspan}{span}
\DeclareMathOperator{\Sym}{Sym}
\DeclarePairedDelimiterX\braket[2]{\langle}{\rangle}{#1\,\delimsize\vert\,\mathopen{}#2}
\newcommand{\lincombiparts}{p}
\title{Asymptotic tensor rank is characterized by polynomials}
\author{Matthias Christandl\thanks{Department of Mathematical Sciences, University of Copenhagen, Universitetsparken 5, 2100 Copenhagen, Denmark} \and Koen Hoeberechts\thanks{Korteweg-de Vries Institute for Mathematics, University of Amsterdam, Science Park 105--107, 1098 XG Amsterdam, Netherlands} \and Harold Nieuwboer\footnotemark[1] \and P\'eter Vrana\thanks{Department of Algebra and Geometry, Institute of Mathematics, Budapest University of Technology and Economics, M\H uegyetem rkp.\ 3., H-1111 Budapest, Hungary} \and Jeroen~Zuiddam\footnotemark[2]}
\date{}
\begin{document}
\maketitle

\begin{abstract}
Asymptotic tensor rank, originally developed to characterize the complexity of matrix multiplication, is a parameter that plays a fundamental role in problems in mathematics, computer science and quantum information. This parameter is notoriously difficult to determine; indeed, determining its value for the $2\times 2$ matrix multiplication tensor would determine the matrix multiplication exponent, a long-standing open problem. 

Strassen's asymptotic rank conjecture, on the other hand, makes the bold statement that asymptotic tensor rank equals the largest dimension of the tensor and is thus as easy to compute as matrix rank. Recent works have proved strong consequences of Strassen's asymptotic rank conjecture in computational complexity theory. Despite tremendous interest, much is still unknown about the structural and computational properties of asymptotic rank; for instance whether it is computable.

We prove that asymptotic tensor rank is ``computable from above'', that is, for any real number $r$ there is an algorithm that determines, given a tensor $T$, if the asymptotic tensor rank of $T$ is at most $r$. The algorithm has a simple structure; it consists of evaluating a finite list of polynomials on the tensor. Indeed, we prove that the sublevel sets of asymptotic rank are Zariski-closed (just like matrix rank). While we do not exhibit these polynomials explicitly, their mere existence has strong implications on the structure of asymptotic rank. 

As one such implication, we find that the values that asymptotic tensor rank takes on all tensors is a well-ordered set. In other words, any non-increasing sequence of asymptotic ranks stabilizes (``discreteness from above''). In particular, for the matrix multiplication exponent (which is the log of an asymptotic rank) there is no sequence of exponents of bilinear maps that approximates it arbitrarily closely from above without being eventually constant. In other words, any such upper bound on the matrix multiplication exponent that is close enough, will ``snap'' to it. Previously such discreteness results were only known for finite fields or for other tensor parameters (e.g., asymptotic slice rank). We obtain them for infinite fields like the complex numbers.

We prove our result more generally for a large class of functions on tensors, and in particular obtain similar properties for all functions in Strassen's asymptotic spectrum of tensors. We prove a variety of related structural results on the way. For instance, we prove that for any converging sequence of asymptotic ranks, the limit is also an asymptotic rank for some tensor. We leave open whether asymptotic rank is also discrete from below (which would be implied by Strassen's asymptotic rank conjecture).
\end{abstract}

\section{Introduction}
Asymptotic tensor rank is a fundamental parameter in algebraic complexity theory \cite{burgisser1997algebraic, blaser2013fast}. Originally rooted in the study of the matrix multiplication exponent, a burst of recent works has put this tensor parameter, and specifically Strassen's asymptotic rank conjecture, at the forefront of a wide range of computational complexity problems \cite{10.1145/3618260.3649656, 10.1145/3618260.3649620, björklund2024chromaticnumber19999ntime}.
These kinds of asymptotic tensor parameters more broadly play an important role in various fields, like additive combinatorics and quantum information theory (asymptotic slice rank, asymptotic subrank) \cite{tao, naslund_sawin_2017, vrana2015transformation}. 

Despite tremendous effort (resulting in new matrix multiplication algorithms \cite{le2014powers, DBLP:conf/soda/AlmanW21, duan2023faster, williams2023new}, barriers \cite{alman2018limits,v017a002,DBLP:conf/mfcs/BlaserL20}, new routes \cite{cohn2003group, cohn2005group,cohn2013fast,MR3631613,blasiak2022matrix, blasiak2024finitematrixmultiplicationalgorithms}, and fundamental theory \cite{DBLP:conf/focs/Strassen86, DBLP:conf/issac/Strassen12,MR3729273, MR4495838, wigderson2022asymptotic, kaski2024universalsequencetensorsasymptotic}), we are still far away from determining the matrix multiplication exponent, computing asymptotic ranks in general, or resolving the asymptotic rank conjecture. All in all, determining asymptotic rank has turned out very challenging and much is still unknown about the properties of this parameter.

In this paper we prove a \emph{polynomial characterization} of asymptotic tensor rank: for any number $r$ there are finitely many polynomials on tensors whose vanishing determines if the asymptotic rank is at most $r$ (just like matrix rank is characterized by vanishing of determinants of submatrices). This characterization has many consequences regarding the computation and topological structure of this parameter. Indeed, for any $r$ it leads to an algorithm to determine if asymptotic rank is at most $r$. We obtain from it that asymptotic rank is semi-continuous (like matrix rank) and that its values are well-ordered, that is, discrete from above: any non-increasing sequence of asymptotic ranks stabilizes. For the matrix multiplication exponent $\omega$, this implies in particular (as we will explain more) that there is a constant $\eps > 0$ such that no tensor has exponent between $\omega$ and $\omega + \eps$.

We will discuss these results and their meaning in more detail. First we discuss the context of this work in complexity theory and mathematics.

\subsubsection*{Matrix multiplication exponent and asymptotic rank conjecture}

It is a fundamental open problem to determine the matrix multiplication exponent $\omega$, which is defined as the infimum over all real numbers $\tau$ such that $n\times n$ matrices can be multiplied using $\mathcal{O}(n^\tau)$ arithmetic operations, with current state of the art $2 \leq \omega < 2.371339$ \cite{le2014powers, DBLP:conf/soda/AlmanW21, duan2023faster, williams2023new,almanMoreAsymmetryYields2025}. It is very well possible that $\omega = 2$. 

The matrix multiplication problem can naturally be phrased in terms of tensors and asymptotic rank. Namely $2^\omega = \asymprank(\langle 2,2,2\rangle)$, where the asymptotic rank $\asymprank(T) = \lim_{n\to\infty}\tensorrank(T^{\boxtimes n})^{1/n}$ of a tensor~$T$ measures the rate of growth of tensor rank on Kronecker powers of $T$, and where $\langle 2,2,2\rangle \in \FF^4 \otimes \FF^4 \otimes \FF^4$ is the so-called $2\times 2$ matrix multiplication tensor. Whether $\omega=2$ is thus tightly linked to the question whether asymptotic rank can take non-integer values or not. 

Strikingly, it is possible that for \emph{every} tensor $T \in \FF^n \otimes \FF^n \otimes \FF^n$ the asymptotic rank is at most~$n$, as opposed to the tensor rank, which can be $\Omega(n^2)$ (but we do not know such tensors explicitly \cite{DBLP:journals/jacm/Raz13, Blaser2014}).
Strassen's asymptotic rank conjecture indeed states that this is true, and more precisely states that asymptotic rank equals the largest flattening rank of the tensor (matrix rank after grouping two legs of the tensor), and would thus imply not only that asymptotic rank is always an integer, but also that it is easy to compute and that $\omega=2$. 

Intriguingly, there is a partial converse to the above connection between the asymptotic rank conjecture and the matrix multiplication exponent, namely \cite{strassen1988asymptotic}, for any tensor in $\FF^n \otimes \FF^n \otimes \FF^n$ the asymptotic rank is at most $n^{2\omega/3}$. In particular, if $\omega =2$, then every asymptotic rank is at most $n^{4/3}$, which is ``not far'' from the claim of the asymptotic rank conjecture. In this sense, matrix multiplication almost acts as a ``complete'' instance for the asymptotic rank conjecture. In the same spirit, Kaski and Michałek proved a completeness result for the asymptotic rank conjecture for explicit sequences of tensors \cite{kaski2024universalsequencetensorsasymptotic}.

Recent works have found new strong connections between the asymptotic rank conjecture and a range of computational complexity problems related to the chromatic number, set partitioning and the set cover conjecture \cite{10.1145/3618260.3649656, 10.1145/3618260.3649620, björklund2024chromaticnumber19999ntime}.

\subsubsection*{Structure of asymptotic ranks}

The asymptotic rank conjecture naturally leads to the question: What are the possible values that the asymptotic rank can take,
\[
\mathcal{R} = \{\asymprank(T) : T \in \FF^{d_1} \otimes \FF^{d_2} \otimes \FF^{d_3}, \, d_1, d_2, d_3 \in \ZZ_{\geq1}\}?
\]
And more generally: What is the structure (geometric, algebraic, topological, computational) of this set of values $\mathcal{R}$?
Is there anything we can say about $\mathcal{R}$ without resolving the asymptotic rank conjecture or determining $\omega$? Little is known. Clearly, such questions can be asked much more broadly, for higher-order tensors, but also for other asymptotic tensor parameters (asymptotic slice rank, subrank).

One known structural property is that $\mathcal{R}$ is closed under applying any univariate polynomial with nonnegative integer coefficients \cite[Theorem 4.8]{wigderson2022asymptotic}. Thus $\mathcal{R}$ has ``many'' elements. 
On the other hand, a sequence of recent works have revealed a strikingly ``discrete'' structure of sets like $\mathcal{R}$ (for various notions of asymptotic ranks), and thus that they have ``not too many'' elements. In particular, it can be seen (with a simple proof) that over any finite field (or finite set of coefficients in any field), $\mathcal{R}$ is a discrete set \cite{briet_et_al:LIPIcs.ITCS.2024.20}. Their techniques, however, do not apply to infinite fields like the complex numbers. This perhaps leaves the impression that discreteness may be a consequence of considering finitely many tensors in each format. 
(One of the main results of this paper, however, is that $\mathcal{R}$ is discrete from above over \emph{infinite} fields, like the complex numbers).

More broadly, notions of discreteness were proven for a range of asymptotic tensor parameters, $\asympfunc(T) = \lim_{n\to\infty} F(T^{\boxtimes n})^{1/n}$ for varying $F$.
It was shown in \cite{bdr} that for a class of functions over finite fields, which includes asymptotic (sub)rank and asymptotic slice rank, the set of values that they take is well-ordered (discrete from above). 
The work~\cite{MR4495838}, which resolved a problem on Strassen's asymptotic spectrum \cite{strassen1988asymptotic}, proved as a consequence that asymptotic slice rank over complex numbers takes only finitely many values per format (because of a deep result on the structure of moment polytopes in representation theory), and thus only countably many values in total.
The work \cite{https://doi.org/10.48550/arxiv.2212.12219} proved that a class of tensor parameters over complex numbers, which again includes asymptotic (sub)rank and asymptotic slice rank, take only countably many values.
\cite{briet_et_al:LIPIcs.ITCS.2024.20} proved that asymptotic (sub)rank and asymptotic slice rank over any finite set of coefficients take only a discrete set of values.
In \cite{cgz, gesmundo2023gap} explicit gaps were determined in the smallest values of these parameters.

Unlike for the asymptotic tensor rank (for which nothing is known), the computational complexity of tensor rank is well-understood. Namely, it is known that tensor rank is \textbf{NP}-hard over the rationals and \textbf{NP}-complete over any finite field~\cite{DBLP:journals/jal/Hastad90} and this was extended by~\cite{DBLP:journals/jacm/HillarL13}. Recently, \cite{DBLP:journals/mst/SchaeferS18, shitov2016hardtensorrank} proved the stronger statement that tensor rank over any field has the same complexity as deciding the existential theory of that field, and~\cite{DBLP:conf/approx/Swernofsky18, DBLP:conf/stoc/BlaserIJL18} proved that tensor rank is hard to approximate. These results, however, have no immediate implication for the computational complexity of asymptotic tensor rank (which, for all we know, may be in \textbf{P}).

\subsubsection*{New results}
In this paper:
\begin{itemize}    
    \item We prove that asymptotic tensor rank is computable from above: over ``computable fields'', for every upper bound $r$ there is an algorithm that, given any $d \times d \times d$ tensor~$T$, decides if its asymptotic tensor rank is at most $r$. %
    \item As the core ingredient, we prove, over any field $\FF$, that for a broad class of regularized tensor functionals, including all points in Strassen's asymptotic spectrum, the sublevel sets on every fixed tensor format are Zariski-closed. For tensors over subfields of~$\CC$, it follows that these functionals are lower-semicontinuous in the Euclidean topology. In particular, for asymptotic tensor rank, every sublevel set is determined by the vanishing of finitely many polynomials on tensors. 
    \item We prove for asymptotic tensor rank that the set of values it takes on all tensors is well-ordered (discrete from above), and that when~$\FF = \CC$ this set is closed in the Euclidean topology. In particular, in the context of the matrix multiplication exponent~$\omega$, we find that there is a constant $\eps > 0$ such that no tensor has asymptotic rank strictly between~$2^\omega$ and~$2^{\omega + \eps}$.
    \item For every point in Strassen's asymptotic spectrum, we prove that the set of values attained on tensors of arbitrary format is well-ordered. Via Strassen's duality, the Zariski-closedness of asymptotic spectrum sublevel sets further implies that for every tensor $T$, the set of tensors $S$ with $S^{\boxtimes n} \leq T^{\boxtimes (n + o(n))}$ (i.e., asymptotic restrictions of $T$) is Zariski-closed.
    \item As a technical ingredient for the general version of our result we develop new lower bounds on a type of tensor-to-matrix transformation (max-rank) that may be of independent interest.
\end{itemize}
\subsection{Polynomial characterization}

We prove that upper bounds on asymptotic rank are computable:
\begin{theorem}\label{th:intro-1}
  For any computable field $\FF$ and $k \geq3$, for any $r \in \RR$, there is an algorithm that, given $d\in \ZZ_{\geq 1}^k$ and any $k$-tensor $T \in \FF^{d_1} \otimes \dotsb \otimes \FF^{d_k}$, decides if $\asymprank(T) \leq r$.
\end{theorem}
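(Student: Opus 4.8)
The plan is to deduce this as a fairly short consequence of the (much deeper) structural input stated above: that for every $r\in\RR$ the sublevel set $L_r=\{T\in\FF^{d_1}\otimes\cdots\otimes\FF^{d_k}:\asymprank(T)\le r\}$ is Zariski-closed. Once this is granted the algorithm itself is immediate: $L_r$ is the common vanishing locus of finitely many polynomials $p_1,\dots,p_m$ in the entries of the tensor, so the algorithm reads off the format of the input $T$, evaluates $p_1(T),\dots,p_m(T)$, and answers ``$\asymprank(T)\le r$'' precisely when all of them vanish. Since $\FF$ is a computable field, both the evaluation and the tests $p_i(T)=0$ are effective, and for fixed $r$ and fixed format this runs in time polynomial in the size of $T$, so it is ``efficient'' as well. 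The only point requiring thought is that the $p_i$ be choosable with \emph{computable} coefficients, equivalently that $L_r$ be defined over a computable subfield of $\FF$.

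For the field of definition, let $\KK_0$ be the algebraic closure of the prime field (so $\KK_0=\barQQ$ in characteristic $0$, and $\KK_0=\overline{\mathbb{F}}_p$ in characteristic $p$), whose elements carry a finite description and whose arithmetic, including zero-testing, is computable. I would argue in two moves. First, $L_r^{\overline{\FF}}$, the version of $L_r$ over the algebraic closure $\overline{\FF}$ (which is Zariski-closed by the cited result applied over $\overline{\FF}$), is stable under every $\sigma\in\mathrm{Aut}(\overline{\FF}/\KK_0)$: such a $\sigma$ carries a rank-$s$ decomposition of $T^{\boxtimes n}$ to a rank-$s$ decomposition of $\sigma(T)^{\boxtimes n}$, hence preserves $\tensorrank(T^{\boxtimes n})$ for all $n$ and therefore preserves $\asymprank$; and a Zariski-closed subset of the ambient tensor space over $\overline{\FF}$ that is stable under $\mathrm{Aut}(\overline{\FF}/\KK_0)$, with $\KK_0$ perfect, is defined over $\KK_0$. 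Second, to come back from $\overline{\FF}$ to $\FF$ one uses that asymptotic rank is unchanged under the extension $\FF\subseteq\overline{\FF}$: an optimal rank decomposition of $T^{\boxtimes n_0}$ over $\overline{\FF}$ uses only finitely many field elements, hence lives over a finite extension $\FF'/\FF$, so the $m$-th Kronecker power of that decomposition gives $\tensorrank_\FF(T^{\boxtimes n_0 m})\le[\FF':\FF]\cdot\tensorrank_{\FF'}(T^{\boxtimes n_0})^m$, and taking $(n_0m)$-th roots as $m\to\infty$ and then $n_0\to\infty$ yields $\asymprank_\FF(T)\le\asymprank_{\overline{\FF}}(T)$, with the reverse inequality trivial. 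Consequently $L_r^{\FF}=L_r^{\overline{\FF}}\cap\bigl(\FF^{d_1}\otimes\cdots\otimes\FF^{d_k}\bigr)$ is exactly the set of $\FF$-points of the $\KK_0$-variety produced in the first move, so the $p_i$ may be taken in $\KK_0[\text{entries}]$, which completes the reduction.

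The genuine obstacle is the cited input itself — the Zariski-closedness of $L_r$ — which is the technical heart of the paper and which I am treating as a black box here; within the deduction the delicate spots are the field-of-definition argument in positive characteristic and over non-algebraically-closed $\FF$, together with the fact that $p_1,\dots,p_m$ depend on the format, so that a \emph{single} algorithm valid for all formats (rather than one per format) needs the Zariski-closedness proof to be effective in the format. Finally it is worth recording the complementary, more ``constructive'', side of the picture, which explains the phrase ``computable from above'': classically $\asymprank$ equals asymptotic border rank, and border rank is submultiplicative under $\boxtimes$, so $\asymprank(T)=\inf_{n}\underline{\tensorrank}(T^{\boxtimes n})^{1/n}$; since ``$\underline{\tensorrank}(T^{\boxtimes n})\le s$'' is the vanishing of fixed integer-coefficient polynomials at $T^{\boxtimes n}$, it is decidable over any computable field, so ``$\asymprank(T)<r'$'' is semi-decidable for every $r'$, and combined with the (also assumed) well-orderedness of the value set this reduces deciding ``$\asymprank(T)\le r$'' to knowing the smallest value $\asymprank$ takes strictly above $r$ and to certifying ``$\asymprank(T)>r$'' — the latter amounting to producing a polynomial that vanishes on $L_r$ but not at $T$, which loops straight back to the polynomial characterization.
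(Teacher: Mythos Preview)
Your approach is correct and matches the paper's: Theorem~1.1 is labeled ``Informal'' and the paper derives it in one sentence from the Zariski-closedness of sublevel sets (Theorem~1.2), simply asserting that ``over suitably computable fields, like the algebraic closure of the rationals, these polynomials are also computable,'' with the conciseness reduction relegated to a footnote. You have actually supplied more detail than the paper does---the descent to $\KK_0$ via $\mathrm{Aut}(\overline{\FF}/\KK_0)$-invariance of $\asymprank$ and the field-extension invariance of $\asymprank$ are left entirely implicit there---so apart from the harmless slip that the constant $[\FF':\FF]$ in your $\tensorrank_{\FF}$ versus $\tensorrank_{\FF'}$ comparison should probably carry an exponent depending on $k$ (which is irrelevant after $m\to\infty$), your write-up is strictly more complete than the paper's own derivation.
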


Note that the algorithm in \cref{th:intro-1} is not uniform in~$r$, but is uniform in the dimensions~$d_i$ of the input tensor, for simple reasons that we will explain shortly.
Computability of the field here is meant in the sense of~\cite{rabinComputableAlgebraGeneral1960}.
The computability condition for the field is a natural one: we need to be able to write down (and compute with) the tensor and certain polynomials on the tensor space that the algorithm relies on. \cref{th:intro-1} almost directly follows from the following Zariski-closedness statement, which is the geometric core of the argument:

\begin{theorem}\label{th:intro-2}
For any field $\FF$,  $k\geq3$, $d \in \ZZ^k_{\geq1}$, and $r \in \RR$, the sublevel set 
\[
\{T \in \FF^{d_1} \otimes \cdots \otimes \FF^{d_k} : \asymprank(T) \leq r\}
\]
is Zariski-closed. 
\end{theorem}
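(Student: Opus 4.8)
The plan is to reduce the statement to a \emph{finitary} one and then use that an arbitrary intersection of Zariski-closed sets is Zariski-closed. Two ingredients enable this. First, $\asymprank(T)=\inf_{n\geq1}\underline{\tensorrank}(T^{\kron n})^{1/n}$, where $\underline{\tensorrank}$ is border rank: the limit equals the infimum by submultiplicativity under $\kron$, and asymptotic rank agrees with asymptotic border rank by the classical fact that $\tensorrank(S)$ exceeds $\underline{\tensorrank}(S)$ only by a factor polynomial in the approximation degree (Bini). Second, for each fixed $n$ and $s$ the set
\[
\mathcal{C}_n^{(s)}:=\bigl\{T\in\FF^{d_1}\otimes\cdots\otimes\FF^{d_k} : \underline{\tensorrank}(T^{\kron n})\leq s\bigr\}
\]
is Zariski-closed, being the preimage, under the polynomial map $T\mapsto T^{\kron n}$, of the variety of tensors of border rank at most $s$. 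Unwinding the definition of the infimum gives the ``soft'' description
\[
\{T:\asymprank(T)\leq r\}=\bigcap_{m\geq1}\ \bigcup_{n\geq1}\mathcal{C}_n^{(\lfloor(r+1/m)^n\rfloor)},
\]
but the inner unions are infinite, so this does not yet yield closedness.

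The crux is therefore a \textbf{uniformity statement}: for every $\eps>0$ there is $N=N(\eps,r,d)$, depending only on $\eps$, $r$, and the format $d$ --- not on $T$ --- such that every tensor $T$ of format $d$ with $\asymprank(T)\leq r$ already satisfies $\underline{\tensorrank}(T^{\kron n})\leq\lfloor(r+\eps)^n\rfloor$ for some $n\leq N$. Granting this, each inner union above can be truncated at $N(1/m)$, becoming a finite union of Zariski-closed sets, and the intersection over $m$ is then Zariski-closed, which is the theorem. An essentially equivalent, more structural, reformulation is: for each $\eps>0$ there exist \emph{finitely many} tensors $U_1,\dots,U_\ell$ of bounded format with $\asymprank(U_j)\leq r+\eps$ such that every $T$ of format $d$ with $\asymprank(T)\leq r$ is a degeneration of some $U_j$; since $\{T:T\trianglelefteq U_j\}$ is the Zariski closure of the constructible set of restrictions of $U_j$ into the ambient space, and degeneration does not increase asymptotic rank, this again writes $\{\asymprank\leq r\}$ as a Zariski-closed set.

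Proving the uniformity statement is where I expect the real work to lie. I would attack it by compactness --- normalizing to the Euclidean-compact unit sphere of the tensor space, and/or using compactness of Strassen's asymptotic spectrum $\mathbf{X}$ together with $\asymprank(T)=\max_{\phi\in\mathbf{X}}\phi(T)$ --- combined with Noetherianity of the coordinate ring, to rule out an infinite ``staircase'' of format-$d$ examples whose $\kron$-powers approach asymptotic rank $r$ arbitrarily slowly. I expect this is also where the new lower bounds on $\maxrank$ enter: one must certify that the finitely many auxiliary tensors supplied by the compactness argument genuinely have asymptotic rank close to $r$ (not larger), i.e.\ that the finitary characterization is tight; and $\maxrank$, which lower-bounds $\asymprank$ and is itself controlled by ordinary matrix ranks, is the natural tool for such lower bounds. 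For the general version --- tensors of any order, and the whole family $\asympfunc$ of parameters including the asymptotic spectrum --- one replaces $\underline{\tensorrank}$ by the appropriate lower-semicontinuous ``border'' version of $F$ and runs the same argument. Finally, to reach arbitrary $\FF$: the polynomials cutting out border-rank loci and the images of the relevant morphisms are defined over the prime field, so Zariski-closedness may be checked after base change to the algebraic closure and then descends to $\FF$.
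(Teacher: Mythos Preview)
Your reduction to a ``uniformity statement'' is sound in outline: if for every $\eps>0$ there existed $N(\eps,r,d)$ as you describe, the sublevel set would indeed be a countable intersection of finite unions of the Zariski-closed sets $\mathcal C_n^{(s)}$. But that uniformity is essentially the entire content of the theorem, and the methods you sketch do not establish it. Euclidean compactness of the unit sphere is unavailable over an arbitrary field (the theorem is stated for all $\FF$), and even over $\CC$ the compactness argument you indicate is circular: to conclude anything about the limit tensor one needs lower-semicontinuity of $\asymprank$, which is precisely what you are trying to prove. Compactness of the asymptotic spectrum likewise does not produce a uniform bound on the power $n$ at which border rank first drops below $(r+\eps)^n$. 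Finally, your expectation about where the $\maxrank$/$\subrank_{i,j}$ lower bounds enter is off: in the paper those bounds are used for a \emph{different} result (\autoref{th:intro:asympspec}, well-orderedness for all spectral points), where they control how spectral points grow with the flattening ranks so that one can reduce from all formats to a single format. They play no role in the proof of Zariski-closedness of sublevel sets.

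The paper's proof is completely different and bypasses border rank, compactness, and uniformity. The key observation is purely algebraic (\autoref{lem:span arbitrary field}): for any $A\subseteq V$ and any $n$, every $T\in\overline{A}$ satisfies $T^{\otimes n}\in\linspan\{S^{\otimes n}:S\in A\}$, because a linear form on $\Sym^n V$ vanishing on $\{S^{\otimes n}:S\in A\}$ pulls back to a degree-$n$ polynomial on $V$ vanishing on $A$, hence on $\overline{A}$. Crucially the number of summands in such a decomposition is at most $\dim\Sym^n V$, which is polynomial in $n$. Writing $T^{\otimes n}=\sum_{i=1}^{p(n)}\alpha_i S_i^{\otimes n}$ with $S_i\in A$, one takes a further $m$-th power, expands, regroups identical $S_i$'s, and applies subadditivity and submultiplicativity of tensor rank; sending $m\to\infty$ and then $n\to\infty$ yields $\asymprank(T)\leq\sup_{S\in A}\asymprank(S)$, i.e.\ $\asymprank[\overline A]=\asymprank[A]$. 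Applied with $A=\{\asymprank\leq r\}$ this gives $\overline A\subseteq A$ directly. The argument works over any field with no base-change gymnastics, and in fact for the whole class of admissible functionals of \autoref{def:admissible functional}.
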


In other words, for every $k, d, r$, there is a finite set of polynomials $p_1, \ldots, p_\ell$ on $V = \FF^{d_1} \otimes \cdots \otimes \FF^{d_k}$ such that for every $T \in V$ we have $\asymprank(T) \leq r$ if and only if all $p_j$ vanish on $T$. Over computable fields, like the algebraic closure of the rationals, these polynomials are also computable, and indeed lead to the algorithm of \cref{th:intro-1}:
\begin{proof}[Proof of~\cref{th:intro-1}]
  Let $r \in \RR$. Assume without loss of generality that~$r \geq 0$, otherwise~$\asymprank(T) > r$ always.
  By~\cref{th:intro-2}, there are finitely many polynomials $p_1, \dotsc, p_\ell$ on~$\FF^{\lfloor r \rfloor} \otimes \dotsb \otimes \FF^{\lfloor r \rfloor}$ whose vanishing set consists of the tensors of asymptotic rank at most~$r$.
  Then the algorithm is as follows. On input~$T \in \FF^{d_1} \otimes \dotsb \otimes \FF^{d_k}$, determine its flattening ranks~$\tensorrank_i(T)$. As these are lower bounds on asymptotic rank, if $\tensorrank_i(T) > r$ for any~$i$, then~$\asymprank(T) > r$.
  Otherwise, we can compute an embedding~$S$ of~$T$ into~$\FF^{\lfloor r \rfloor} \otimes \dotsb \otimes \FF^{\lfloor r \rfloor}$. Evaluate the polynomials~$p_1, \dotsc, p_\ell$ on~$S$. If all of them vanish, then~$\asymprank(T) \leq r$; otherwise,~$\asymprank(T) > r$.
\end{proof}

We note that \cref{th:intro-2} has ``practical'' implications: if for some family of tensors one can prove an upper bound on the asymptotic tensor rank, then by \cref{th:intro-2} this upper bound directly extends to the Zariski-closure. Before our result this was only known on finite unions of GL-orbits.\footnote{Indeed, it is known that for any $S$ in the closure of $(\GL_{d_1} \times \cdots \times \GL_{d_k}) T$ (i.e., $S$ is a degeneration of $T$) the asymptotic rank $\asymprank(S)$ is at most $\asymprank(T)$; and the closure of a finite union of orbits equals the union of the closures.} 
In particular, it also follows from \cref{th:intro-2} that, over the complex numbers, for any sequence of tensors $T_1, T_2, \ldots$ converging to a tensor $T$ in the Euclidean norm, if the asymptotic rank of all $T_i$ is at most $r$, then the asymptotic rank of $T$ is at most~$r$ (Euclidean lower-semicontinuity). For instance, we may apply this idea to tensors converging to a matrix multiplication tensor $\langle n,n,n\rangle$ in order to get upper bounds on $\omega$.

As an ingredient in the proof of \cref{th:intro-2} we prove that for any subset $A \subseteq \FF^{d_1} \otimes \cdots \otimes \FF^{d_k}$ the supremum of $\asymprank$ over the Zariski closure of $A$ equals the supremum over $A$ itself. This result we prove via a decomposition of powers of elements in the closure of $A$ in terms of powers of elements in $A$  %
combined with an asymptotic double-blocking analysis of asymptotic rank.

In fact, we prove \cref{th:intro-1} and \cref{th:intro-2} for a more general class of functions that includes all functions in Strassen's asymptotic spectrum of tensors. Via Strassen's duality~\cite{strassen1988asymptotic}, this implies that the set of all tensors that are an asymptotic restriction of a given tensor, is Zariski-closed.\footnote{For two tensors $S,T$ there is an asymptotic restriction $S \asympleq T$ if $S^{\boxtimes n} \leq T^{\boxtimes (n + o(n))}$. Strassen's duality says that $S \asympleq T$ if and only if for every $F$ in the asymptotic spectrum $\Delta(\FF,k)$ we have $F(S) \leq F(T)$. Then $\{S \in V : S \asympleq T\} = \cap_{F \in \Delta(\FF,k)} \{S : F(S) \leq F(T)\}$. Every sublevel set $\{S : F(S) \leq F(T)\}$ is closed, so their intersection is closed.} 

\subsection{Discreteness from above and closedness}

As a consequence of \cref{th:intro-2} we prove that the set of values that asymptotic rank takes, on all tensors (of fixed order, but arbitrary dimension), is well-ordered (discrete from above, every non-increasing sequence stabilizes):

\begin{theorem}\label{th:intro-3}
$\mathcal{R} = \{\asymprank(T) : T \in \FF^{d_1} \otimes \cdots \otimes \FF^{d_k}, d \in \ZZ_{\geq1}^k\}$ %
is well-ordered.
\end{theorem}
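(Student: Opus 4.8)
The statement to prove is that $\mathcal{R}$ (the set of asymptotic ranks of $k$-tensors of any dimension over a fixed field $\mathbb{F}$) is well-ordered, equivalently that there is no strictly decreasing sequence $r_1 > r_2 > \cdots$ in $\mathcal{R}$. The given tools are Theorem \ref{th:intro-2} (each sublevel set $\{T \in \mathbb{F}^{d_1}\otimes\cdots\otimes\mathbb{F}^{d_k} : \asymprank(T) \le r\}$ is Zariski-closed) together with the basic facts that $\asymprank$ is at least every flattening rank (so a concise tensor in $\mathbb{F}^{d_1}\otimes\cdots\otimes\mathbb{F}^{d_k}$ with $\asymprank \le r$ has all $d_i \le r$) and at most $\prod_{i\ne j} d_i$ for each $j$. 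Let me describe the argument.

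First I would reduce to a single ambient space. Suppose toward a contradiction that $r_1 > r_2 > \cdots$ is an infinite strictly decreasing sequence in $\mathcal{R}$, realized by tensors $T_m$ with $\asymprank(T_m) = r_m$. Replacing each $T_m$ by a concise restriction (which does not change $\asymprank$), and using $r_m \le r_1$, every $T_m$ lives in some format $\mathbb{F}^{d_1}\otimes\cdots\otimes\mathbb{F}^{d_k}$ with all $d_i \le \lceil r_1 \rceil$; since there are only finitely many such formats, infinitely many $T_m$ share one format, so after passing to a subsequence we may assume all $T_m$ lie in one fixed $V = \mathbb{F}^{d_1}\otimes\cdots\otimes\mathbb{F}^{d_k}$, and the $r_m$ still strictly decrease. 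Thus it suffices to show: for fixed $k$ and $d$, the set $\mathcal{R}_{k,d} = \{\asymprank(T) : T \in V\}$ is well-ordered — in fact I only need that it contains no strictly decreasing infinite sequence.

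Now I would exploit the descending chain condition for Zariski-closed subsets of the affine variety $V$ (Noetherianity of the polynomial ring). For each $m$, let $Z_m = \{T \in V : \asymprank(T) \le r_m\}$, which is Zariski-closed by Theorem \ref{th:intro-2}. Since $r_1 > r_2 > \cdots$, we have $Z_1 \supseteq Z_2 \supseteq \cdots$, a descending chain of closed sets, which must stabilize: there is $N$ with $Z_m = Z_N$ for all $m \ge N$. But $T_N \in Z_N = Z_{N+1}$, so $\asymprank(T_N) \le r_{N+1} < r_N = \asymprank(T_N)$, a contradiction. Hence no strictly decreasing infinite sequence exists in $\mathcal{R}_{k,d}$, and by the reduction above none exists in $\mathcal{R}$, which is exactly well-orderedness.

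The step I expect to be the real content — but it is already granted to us as Theorem \ref{th:intro-2} — is the Zariski-closedness of the sublevel sets; everything else here is the soft packaging (conciseness reduction, pigeonhole on formats, Noetherian descending chain condition). One point to be slightly careful about in writing this up is that the $r_m$ are arbitrary reals while the $Z_m$ depend only on the \emph{value} $r_m$, so the inclusions $Z_{m+1}\subseteq Z_m$ are genuine; and one should note that ``well-ordered'' for a subset of $\mathbb{R}$ is equivalent to ``no infinite strictly decreasing sequence'', which is what the DCC argument directly yields. If desired, one can additionally record that each $\mathcal{R}_{k,d}$ is in fact finite in each format only under extra hypotheses (as in \cite{MR4495838} for slice rank over $\mathbb{C}$); the argument above gives well-orderedness unconditionally, which is all the theorem claims.
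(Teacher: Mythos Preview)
Your proof is correct and follows essentially the same approach as the paper: first reduce to a fixed format via conciseness and the flattening-rank lower bound, then use Zariski-closedness of sublevel sets (Theorem~\ref{th:intro-2}) together with the Noetherian descending chain condition to rule out infinite strictly decreasing sequences. The only cosmetic differences are that the paper phrases the fixed-format argument for non-increasing sequences (Corollary~\ref{cor:asymprank-well-order}) and then handles the reduction by observing that a finite union of well-ordered sets is well-ordered (Corollary~\ref{cor:asymprank-wellord-alltensors}), whereas you pass to a subsequence in a single format via pigeonhole; the content is the same.
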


In particular, \cref{th:intro-3} says that there cannot be a sequence of tensors with asymptotic rank strictly larger than $2^\omega$ that gets arbitrarily close to it. Indeed, if it comes arbitrarily close it must eventually ``snap'' to $2^\omega$. Previously this was only known over finite fields (where the proof is simple, see \cite[Theorem 4.4]{briet_et_al:LIPIcs.ITCS.2024.20}; however that proof strategy does not work over infinite fields).  %

We leave as an open problem to prove discreteness from below for asymptotic rank. That is, we do not know if there can be (non-constant) increasing converging sequences of asymptotic ranks of tensors. 

Towards proving discreteness, as an intermediate result, we prove closedness of the set of values of asymptotic rank, when the base field is the complex numbers:

\begin{theorem}
Let~$\field = \CC$.
For any sequence in $\mathcal{R}$ that converges, the limit is in $\mathcal{R}$.
\end{theorem}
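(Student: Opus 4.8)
The plan is to combine \autoref{th:intro-2} (Zariski-closedness of sublevel sets) with the fact that over $\CC$ there are only countably many tensor formats, and then argue by a diagonal / accumulation argument that a convergent sequence of asymptotic ranks must have its limit realized on some tensor. Concretely, suppose $r_1, r_2, \ldots \in \mathcal{R}$ with $r_i \to r$. For each $i$ pick a tensor $T_i \in \FF^{d^{(i)}_1} \otimes \cdots \otimes \FF^{d^{(i)}_k}$ with $\asymprank(T_i) = r_i$. The first step is a normalization: replace each $T_i$ by a concise tensor with the same asymptotic rank (conciseness only removes "empty" directions and does not change $\asymprank$). Since the flattening ranks lower bound $\asymprank$, and the $r_i$ are bounded (they converge), the dimensions $d^{(i)}_j$ of the concise representatives are bounded by a constant depending only on $\sup_i r_i$. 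Hence all $T_i$ live, after this normalization, in one of \emph{finitely many} fixed tensor spaces $V = \FF^{d_1} \otimes \cdots \otimes \FF^{d_k}$. Passing to a subsequence, we may assume they all lie in a single such $V$.

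Next, I would use compactness in the Euclidean topology on $V$ (here is where $\field = \CC$ enters, via finite dimensionality over $\CC$ and local compactness). The tensors $T_i$ need not be bounded in norm, but asymptotic rank is invariant under scaling each $T_i$ by a nonzero scalar, so we may rescale to $\|T_i\| = 1$; then by Bolzano–Weierstrass a subsequence converges to some $T \in V$ with $\|T\| = 1$. Now I want to conclude $\asymprank(T) = r$. The inequality $\asymprank(T) \le r$ follows from Euclidean lower-semicontinuity of $\asymprank$ — which is exactly a consequence of \autoref{th:intro-2}, since each sublevel set $\{S : \asymprank(S) \le r + \eps\}$ is Zariski-closed, hence Euclidean-closed, and contains the tail of the sequence once $r_i \le r + \eps$; letting $\eps \downarrow 0$ and using that the intersection of these closed sets is $\{S : \asymprank(S) \le r\}$ gives $\asymprank(T) \le r$.

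The main obstacle is the reverse inequality $\asymprank(T) \ge r$, i.e., ruling out a "downward jump" of asymptotic rank in the limit. This is precisely the failure mode of semicontinuity: a priori $T$ could be a degeneration of the $T_i$'s with strictly smaller asymptotic rank, so the limit $r$ might not be attained \emph{at this particular} $T$. I see two routes. The cleaner one: upgrade the argument to produce, for each threshold $r' < r$, a tensor in $V$ with asymptotic rank $> r'$, using that infinitely many $T_i$ have $\asymprank(T_i) = r_i > r'$; combined with the well-ordering statement (\autoref{th:intro-3}), the set $\mathcal{R} \cap [0, \sup_i r_i]$ has no infinite strictly decreasing subsequence, so the sequence $r_i$, being convergent, must in fact be \emph{eventually constant} equal to $r$ — at which point $r = r_i \in \mathcal{R}$ for large $i$ and we are done. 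If instead one does not want to invoke \autoref{th:intro-3} (to keep this theorem logically independent), the route is: apply the "supremum over Zariski closure equals supremum over the set" ingredient mentioned in the excerpt to the set $A = \{T_i : i \ge N\}$ for each $N$; its Zariski closure is some variety $W_N \subseteq V$, these form a \emph{decreasing} chain of closed sets, which stabilizes to some $W_\infty$ by Noetherianity, $T \in W_\infty$, and $\sup_{W_\infty} \asymprank = \sup_{i} r_i$-type bounds combined with the tail behavior pin the value; the delicate point there is matching the \emph{supremum} over the variety with the \emph{limit} of the sequence, which again seems to need a discreteness input. I expect the intended proof takes the first route, deriving completeness as a quick corollary of Zariski-closedness plus well-ordering; the genuinely hard content was already spent in \autoref{th:intro-2}.
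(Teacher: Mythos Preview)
Your reduction to a single fixed format via conciseness and the flattening-rank lower bound is exactly right, and you correctly see that Euclidean lower-semicontinuity (from \autoref{th:intro-2}) only gives $\asymprank(T)\le r$ for any Euclidean limit $T$. The gap is your ``cleaner route'': the inference ``$\mathcal R\cap[0,\sup_i r_i]$ has no infinite strictly decreasing subsequence, so the convergent sequence $r_i$ is eventually constant'' is false. Well-orderedness only controls \emph{decreasing} sequences; it says nothing about an increasing sequence like $r_i\nearrow r$. A set such as $\{0\}\cup\{1-1/n:n\ge1\}$ is well-ordered yet contains a convergent sequence that is not eventually constant and whose limit is not in the set. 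So \autoref{th:intro-3} alone cannot close the argument, and this is precisely why completeness is a separate theorem in the paper rather than a corollary of well-orderedness. Your second route via $\asympfunc[\overline A]=\asympfunc[A]$ runs into the same wall, as you yourself note.

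What the paper does instead is avoid taking a Euclidean limit of tensors altogether. The missing ingredient is an existence-of-maximizer result: over $\CC$, on any nonempty Zariski-closed $X\subseteq V$, the supremum $\asymprank[X]=\sup_{S\in X}\asymprank(S)$ is \emph{attained} by some $T\in X$. This is proved by a Baire-category argument (\autoref{lem:baire-property}): on an irreducible $X$, each superlevel set $X_{>r'}$ with $r'<\asymprank[X]$ is nonempty Zariski-open hence Zariski-dense, and the maximizer locus is their countable intersection, which remains dense (hence nonempty) because $X$ with the Euclidean topology is a Baire space. One then applies this with $X=V_{\le r}$: since $r_i\in\{\asymprank(S):S\in V_{\le r}\}$ for all $i$ and $r_i\to r$, one gets $\asymprank[V_{\le r}]=r$, and the maximizer furnishes a tensor with asymptotic rank exactly $r$. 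Your proposal did not contain this Baire/attainment step, and without it the increasing-sequence case remains open.
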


We note again that all of the above results are consistent with (and may be thought of as evidence towards) the asymptotic rank conjecture, since that would imply that asymptotic rank is computable as the matrix rank of a flattening of the tensor, and that the set of values that asymptotic rank takes is simply the natural numbers $\NN$.

\subsection{Asymptotic spectrum and tensor-to-matrix restrictions}

Finally, we extend \cref{th:intro-3} to the tensor parameters in Strassen's asymptotic spectrum of tensors \cite{strassen1988asymptotic, wigderson2022asymptotic}, a collection of tensor parameters with special properties (see \cref{sec:asympspec}). We denote by $\Delta(\FF,k)$ the asymptotic spectrum of $k$-tensors over $\FF$.
\begin{theorem}\label{th:intro:asympspec}
For every $F \in \Delta(\FF,k)$, $\{F(T) : T \in \FF^{d_1} \otimes \cdots \otimes \FF^{d_k}, d \in \ZZ_{\geq1}^k\}$ is well-ordered. %
When~$\FF = \CC$, this set is moreover Euclidean-closed in~$\RR$.
\end{theorem}

In fact, by Strassen's duality \cite{strassen1988asymptotic}, \cref{th:intro:asympspec} implies \cref{th:intro-3}, and moreover has discreteness implications for asymptotic transformation rates between tensors.
The proof of \cref{th:intro:asympspec} is more involved than \cref{th:intro-3}. It relies not only on the Zariski-closedness of sublevel sets, but also on a more technical ``growth'' argument (which may be of independent interest) in order to obtain well-orderedness across all tensor formats; the same growth argument also yields the Euclidean closedness over~$\CC$. %

This growth argument involves new lower bounds on a type of tensor-to-matrix restrictions.
In quantum information these correspond to $k$-partite to bipartite entanglement transformations (under stochastic local operations and classical communication, SLOCC).

For any $k$-tensor $T \in V_1 \otimes V_2 \otimes \cdots \otimes V_k$, let $\subrank_{1,2}(T)$ be the largest number $r$, such that there are linear $A_i\colon V_i \to V_i$ 
such that 
$
(A_1 \otimes \cdots \otimes A_k) T = \sum_{i=1}^r e_i \otimes e_i \otimes e_1 \otimes \cdots \otimes e_1.
$
The right hand side is essentially a rank $r$ identity matrix on tensor legs 1 and~2. This definition extends to $\subrank_{i,j}$ for $i\neq j \in [k]$ by placing this ``identity matrix'' at legs $i$ and $j$. %
For any proper subset $I \subseteq [k]$, let $\tensorrank_I(T)$ be the matrix rank of the matrix obtained by grouping the legs in $I$ together and grouping the remaining legs together. If $i \in I$ and $j \not\in I$, then $\tensorrank_I(T) \geq \subrank_{i,j}(T)$. We prove the following inequality in the opposite direction.

\begin{theorem}\label{th:intro-Qij}
Let $I \subseteq [k]$  with $1 \leq |I|\leq k-1$, and let $T$ be a $k$-tensor. If $|\FF| > \tensorrank_I(T)$, then 
$
\prod_{\substack{i \in I, j \in [k]\setminus I}} \subrank_{i,j}(T) \geq \tensorrank_I(T).
$
\end{theorem}
\cref{th:intro-Qij} generalizes \cite[Theorem~1.13]{briet_et_al:LIPIcs.ITCS.2024.20}, which covered $k=3$. Using \cref{th:intro-Qij} we prove that  any element $F$ in the asymptotic spectrum is either essentially an element in the asymptotic spectrum for lower order tensors, or grows with the size of the tensor (which in turn is the right ingredient for proving \cref{th:intro:asympspec}).

\subsection{Tensor preliminaries}\label{subsec:prelim}

\paragraph{Restriction, equivalence, conciseness, Kronecker product.}
All vector spaces considered in this paper are assumed to be finite-dimensional.
We denote by $V_1 \otimes \cdots \otimes V_k$ or $\FF^{d_1} \otimes \cdots \otimes \FF^{d_k}$ spaces of $k$-tensors. For two $k$-tensors $S$ and $T$ we say $S$ is a restriction of $T$ and write $S \leq T$ if there are linear maps $A_i$ such that $S = (A_1\otimes \cdots \otimes A_k)T$. Two tensors $S$ and $T$ are equivalent if $S \leq T$ and $T \leq S$. A tensor $T \in \FF^{d_1} \otimes \cdots \otimes \FF^{d_k}$ is called concise if for every $i \in [k]$, the flattening rank $\tensorrank_i(T) = \tensorrank_{\{i\}}(T)$ equals $d_i$. Any tensor is equivalent to a concise tensor. For two $k$-tensors $S$ and $T$ their Kronecker product $S \boxtimes T$ is the $k$-tensor obtained by taking the tensor product and grouping corresponding legs. 

\paragraph{Matrix multiplication, unit tensors and rank.}
For $a,b,c \in \ZZ_{\geq1}$, let $\langle a,b,c\rangle \in \CC^{ab} \otimes \CC^{bc} \otimes \CC^{ba}$ be the tensor $\sum_{i,j,k} e_{i,j} \otimes e_{j,k} \otimes e_{k,i}$ where the sum goes over $i \in [a], j \in [b], k \in [c]$. These are the matrix multiplication tensors and $\asymprank(\langle 2,2,2\rangle) = 2^\omega$. For $r \in \NN$, let $\langle r\rangle = \sum_{i=1}^r e_i \otimes \cdots \otimes e_i$. This is called the rank $r$ unit tensor (of order $k$). Tensor rank $\tensorrank(T)$ is defined as the smallest number $r$ such that $T$ can be written as a sum of $r$ simple tensors $u_1\otimes \cdots \otimes u_k$. We have $\tensorrank(T) = \min \{r \in \NN : T \leq \langle r\rangle\}$.

\section{Zariski-closed sublevel sets for asymptotic rank}
\label{sec:regularized-functionals-zariski-semicontinuous}

In this section we prove that the sublevel sets of asymptotic tensor rank are Zariski-closed, and hence characterized by the vanishing of finitely many polynomials. We then use this to prove that asymptotic rank is discrete from above. We work with a general class of regularizations of ``admissible functionals'' which includes asymptotic rank. 
Let~$\field$ be an arbitrary field.

\subsection{Admissible functionals}
To understand asymptotic rank we need to understand the behavior of tensor rank on powers of tensors. It turns out that our analysis can be done on a slightly more abstract level, where only some of the properties of the situation matter. In particular the space of tensors $\FF^{d_1} \otimes \cdots \otimes \FF^{d_k}$ we replace by an arbitrary vector space $V$ (recall that we assume vector spaces to be finite-dimensional), and we focus on families of functions defined on tensor powers of $V$, with properties that are familiar from tensor rank:
\begin{definition}[Admissible functional]
    \label{def:admissible functional}
    Let $V$ be a vector space over $\field$.
    An \emph{admissible functional}~$\func$ is a family of functions $\{\func_n \colon V^{\otimes n} \to \RR_{\geq 0}\}_{n \in \ZZ_{\geq1}}$ with the following properties:
    \begin{enumerate}[(i)]
        \item $\func_n(T + S) \leq \func_n(T) + \func_n(S)$ for $T, S \in V^{\otimes n}$,
        \item $\func_{n+n'}(T \otimes S) \leq \func_n(T) \func_{n'}(S)$ for $T \in V^{\otimes n}$ and $S \in V^{\otimes n'}$, 
        \item $\func_{n_1 + \dotsb + n_\ell}(T_1 \otimes \dotsb \otimes T_\ell) = \func_{n_1 +  \dotsb + n_\ell}(T_{\sigma(1)} \otimes \dotsb \otimes T_{\sigma(\ell)})$ for every permutation $\sigma : [\ell] \to [\ell]$ and tensors $T_j \in V^{\otimes n_j}$,
        \item $\func_n(\alpha T) = \func_n(T)$ for every non-zero $\alpha \in \field$ and every $T \in V^{\otimes n}$,
        \item %
        $\func_1$ is bounded: there exists $c$ such that for every $T\in V$, $F_1(T) \leq c$. 
    \end{enumerate}
    Note that \cref{def:admissible functional} implies that $F_n$ is bounded for every $n$.\footnote{In fact, $F_n(T)\le \sup_{S\in V} F_1(S)^n\tensorrank(T)\le(\sup_{S\in V} F_1(S)\dim V)^n$.}
    For any admissible functional~$\func$, and $T \in V$, define the \emph{regularization}
    \[
        \asympfunc(T) = \lim_{n \to \infty} \func_{n}(T^{\otimes n})^{1/n}.
    \]
    This is well-defined by Fekete's lemma, and the limit can be replaced by the infimum.
    For any subset $A \subseteq V$, we define
    \[
        \asympfunc[A] = \sup_{T \in A} \asympfunc(T).
    \]
\end{definition}

The main example of a regularized admissible functional is the asymptotic rank $\asymprank$, which we get by taking $V = \FF^{d_1} \otimes \cdots \otimes \FF^{d_k}$ to be a space of $k$-tensors and $\func_n$ to be the tensor rank on $V^{\otimes n}$ regrouped as $k$-tensors via $v_1 \otimes \cdots \otimes v_n \mapsto v_1 \boxtimes \cdots \boxtimes v_n$.\footnote{Here $\boxtimes$ is the ``grouped'' tensor product (Kronecker product), which takes two $k$-tensors and multiplies them to another $k$-tensor. However, one may also use the ``non-grouped'' tensor product instead to get an interesting admissible functional \cite{MR3760763, DBLP:journals/siaga/ChristandlGJ19}.}
More examples are given by the points in the asymptotic spectrum of tensors \cite{strassen1988asymptotic} (see \cref{sec:asympspec}), which includes the flattening ranks and the quantum functionals \cite{MR4495838}.

\subsection{Zariski-closed sublevel sets}

Let $V$ be a vector space over $\field$, and let $\FF[V]$ denote the polynomial functions on $V$.
Let $A \subseteq V$ be a subset. 
The Zariski-closure of $A$ is defined as
\begin{equation*}
    \overline{A} = \{ T \in V : \forall f \in \field[V], \, \left.f\right|_{A} \equiv 0 \Rightarrow f(T) = 0 \}.
\end{equation*}

\begin{theorem}
    \label{thm:regularized admissible functional semicontinuous}
    Let $\func$ be an admissible functional for $V$ and let $A \subseteq V$.
    Then 
    $
    \asympfunc[\overline{A}] = \asympfunc[A].
    $
\end{theorem}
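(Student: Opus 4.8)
The plan is to prove the nontrivial inequality $\asympfunc[\overline{A}] \leq \asympfunc[A]$ (the reverse is immediate since $A \subseteq \overline{A}$). Fix $T \in \overline{A}$; I want to show $\asympfunc(T) \leq \asympfunc[A] =: r$. The key idea is to express high tensor powers of $T$ as small sums of tensor products of elements of $A$, using a standard fact from algebraic geometry: if $T$ lies in the Zariski closure of $A$, then for every $n$, the power $T^{\otimes n}$ lies in the linear span of $\{S^{\otimes n} : S \in A\}$ (this follows because the vanishing ideal of $A$ and its Zariski closure agree, and one can relate the span of $n$-th powers to the degree-$n$ part of the coordinate ring; more precisely, $T \in \overline A$ implies every symmetric multilinear function vanishing on all $S^{\otimes n}$ with $S \in A$ also vanishes on $T^{\otimes n}$). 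Concretely, I would first establish: there is a constant $D$ (depending only on $\dim V$ and $n$, in fact $D \le \dim(V^{\otimes n})$ or better $\binom{\dim V + n - 1}{n}$ after symmetrization) such that $T^{\otimes n} = \sum_{j=1}^{D} \lambda_j S_j^{\otimes n}$ for suitable scalars $\lambda_j \in \FF$ and elements $S_j \in A$.

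Second, I would iterate this to powers of powers. From $T^{\otimes n} = \sum_{j=1}^{D} \lambda_j S_j^{\otimes n}$, taking the $m$-th tensor power and expanding multilinearly gives
\[
T^{\otimes nm} = (T^{\otimes n})^{\otimes m} = \sum_{(j_1,\dots,j_m) \in [D]^m} \lambda_{j_1}\cdots\lambda_{j_m}\, S_{j_1}^{\otimes n} \otimes \cdots \otimes S_{j_m}^{\otimes n},
\]
a sum of at most $D^m$ terms. Now apply the admissibility axioms: by scale-invariance (iv) we may drop the scalars $\lambda_{j_1}\cdots\lambda_{j_m}$ (those terms with a zero scalar are simply deleted), by subadditivity (i) we get $\func_{nm}(T^{\otimes nm}) \leq \sum \func_{nm}(S_{j_1}^{\otimes n}\otimes\cdots\otimes S_{j_m}^{\otimes n})$, and by submultiplicativity (ii) each summand is at most $\prod_{t=1}^m \func_n(S_{j_t}^{\otimes n})$. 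Using boundedness of $\func_n$ (a uniform bound $B_n$ over $V^{\otimes n}$, which exists by the footnote), each factor $\func_n(S_{j_t}^{\otimes n}) \le B_n$, so
\[
\func_{nm}(T^{\otimes nm}) \leq D^m B_n^{m}.
\]
Taking $(nm)$-th roots and letting $m \to \infty$ (with $n$ fixed) yields $\asympfunc(T) \leq (D B_n)^{1/n}$.

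The final step is to optimize over $n$ to replace the crude bound $B_n$ by something governed by $\asympfunc[A] = r$. This is where the real work lies, and it is the main obstacle: the naive bound above only gives $\asympfunc(T) \le (DB_n)^{1/n}$, and since $B_n$ can grow like $(c\dim V)^n$ this does not beat the trivial bound. The fix is what the introduction calls an ``asymptotic double-blocking analysis'': instead of bounding $\func_n(S_j^{\otimes n})$ by $B_n$, I block the $m$ coordinates into groups and, within each group of size $p$, I further take powers so that $\func_{np}(S_j^{\otimes np}) \approx \asympfunc(S_j)^{np} \le r^{np}$ by definition of the regularization (using that the limit equals the infimum, we can choose $p$ large enough that $\func_{np}(S_j^{\otimes np})^{1/(np)} \le r + \eps$ simultaneously for all the finitely many $S_j$ that appear — here it is crucial that for fixed $n$ only finitely many distinct $S_j$ occur). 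Re-running the expansion of $T^{\otimes n m p}$ with blocks of size $p$: it is a sum of at most $D^{mp}$ terms? — no: more carefully, from $T^{\otimes n} = \sum_{j=1}^D \lambda_j S_j^{\otimes n}$ we get $T^{\otimes n p} = \sum_{\vec j \in [D]^{p}} (\prod \lambda_{j_t}) S_{j_1}^{\otimes n}\otimes \cdots \otimes S_{j_p}^{\otimes n}$, a sum of $\le D^p$ terms each of $\func_{np}$-value $\le \prod_t \func_n(S_{j_t}^{\otimes n})$ — this still needs care. The cleaner route: directly take the $p$-th power of the decomposition of $T^{\otimes n}$, group, and then bound each length-$p$ product $S_{j_1}^{\otimes n}\otimes\cdots\otimes S_{j_p}^{\otimes n}$ — whose $\func_{np}$-value is at most $\prod_t \func_n(S_{j_t}^{\otimes n})$, which is still in terms of $\func_n$, not the regularization. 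To actually invoke the regularization I instead decompose $T^{\otimes np} = \sum_{j=1}^{D'} \mu_j R_j^{\otimes np}$ with $R_j \in A$ and $D' \le \binom{\dim V + np -1}{np}$ — i.e. apply the span-of-powers fact at level $np$ rather than $n$ — then take the $m$-th power: $T^{\otimes npm}$ is a sum of $\le (D')^m$ terms, each a product of $m$ tensors of the form $R_{j_t}^{\otimes np}$, so $\func_{npm}(T^{\otimes npm}) \le (D')^m \prod_t \func_{np}(R_{j_t}^{\otimes np}) \le (D')^m (r+\eps)^{npm}$, using that $\func_{np}(R^{\otimes np}) \le \asympfunc(R)^{np} \le (r+\eps)^{np}$ by the infimum form of the regularization and choosing $p$ large for the finitely many $R_j$. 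Taking $(npm)$-th roots, $m\to\infty$: $\asympfunc(T) \le (D')^{1/(np)} (r+\eps)$. Now let $p \to \infty$: since $D' \le \binom{\dim V + np - 1}{np} \le (np)^{\dim V - 1}\cdot e^{np}/\!\dots$ grows only sub-exponentially relative to $np$ — actually $(D')^{1/(np)} \to 1$ because $\log D' = O(\dim V \cdot \log(np)) = o(np)$ — we conclude $\asympfunc(T) \le r + \eps$, and then $\eps \to 0$ gives $\asympfunc(T) \le r = \asympfunc[A]$. Taking the supremum over $T \in \overline A$ finishes the proof. The two delicate points to get right are: (a) the algebraic-geometry lemma that $T^{\otimes n}$ lies in $\linspan\{S^{\otimes n} : S \in A\}$ for $T \in \overline A$ (and that only finitely many $S$ are needed, which follows from finite-dimensionality of $V^{\otimes n}$), and (b) the uniform choice of the blocking parameter $p$ over the finite set of tensors appearing in the level-$np$ decomposition — but note this is mildly circular since the $R_j$ depend on $p$; the correct order is to fix $p$ first via a single decomposition, or to bound $\func_{np}(R^{\otimes np}) \le \sup_{S \in A}\func_{np}(S^{\otimes np})$ and control that supremum's growth, which requires axiom (v) and the boundedness footnote to ensure $\sup_{S\in A}\func_{np}(S^{\otimes np})^{1/(np)} \to \asympfunc[A]$ is at least not obviously false — in fact the clean statement to prove and use is $\lim_{N\to\infty} \big(\sup_{S \in A} \func_N(S^{\otimes N})\big)^{1/N} = \asympfunc[A]$, which holds by Fekete-type superadditivity of $N \mapsto \log\sup_{S\in A}\func_N(S^{\otimes N})$ combined with axiom (v) — and this supremum-regularization identity is exactly the ``double-blocking'' lemma I expect to isolate as the technical heart of the argument.
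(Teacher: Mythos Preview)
Your overall framework is correct and matches the paper: the span lemma $\overline{A}^{(n)} \subseteq \linspan A^{(n)}$ (your point (a)), the decomposition $T^{\otimes n} = \sum_{j=1}^D \lambda_j S_j^{\otimes n}$ with $D\le\binom{\dim V+n-1}{n}$ growing subexponentially in $n$, and the expansion of the $m$-th power using axioms (i), (ii), (iv) are all exactly what the paper does. The gap is in how you pass from the expanded product to the regularization bound.

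Your final route---applying the span lemma at level $N = np$ and then invoking
\[
\lim_{N\to\infty}\Bigl(\sup_{S\in A}\func_N(S^{\otimes N})\Bigr)^{1/N} = \asympfunc[A]
\]
---does not work. You correctly spot the circularity (the $R_j$ depend on $p$), but the proposed fix is a sup--inf interchange that does \emph{not} follow from Fekete's lemma. Fekete applied to the submultiplicative sequence $g(N)=\sup_{S\in A}\func_N(S^{\otimes N})$ gives $\lim_N g(N)^{1/N}=\inf_N g(N)^{1/N}$, but one always has $\sup_S\inf_N\leq\inf_N\sup_S$, and the direction you need is the reverse. Axiom (v) only gives a uniform upper bound $c$ on $\func_1$; it does not give uniform convergence of $\func_N(S^{\otimes N})^{1/N}$ to $\asympfunc(S)$ over $S\in A$, which is what your displayed identity asserts.

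The missing idea is axiom (iii), which you never invoke. The paper fixes $n$ once (so $S_1,\ldots,S_D$ are a \emph{fixed finite} list), expands $T^{\otimes nm}$ as you do, and then for each term $S_{i_1}^{\otimes n}\otimes\cdots\otimes S_{i_m}^{\otimes n}$ uses the permutation-invariance (iii) to \emph{rearrange} the factors, obtaining $\bigotimes_{i=1}^{D} S_i^{\otimes m_i n}$ with $\sum_i m_i = m$. Submultiplicativity then gives $\prod_i \func(S_i^{\otimes m_i n})$. Now the $S_i$ are fixed while $m\to\infty$: for each $i$ with $m_i n$ exceeding a threshold $M(\eps,n)$ one has $\func(S_i^{\otimes m_i n})\le(\asympfunc(S_i)+\eps)^{m_i n}\le(\asympfunc[A]+\eps)^{m_i n}$, while the at most $D$ remaining factors contribute a total of at most $B^{D\,M(\eps,n)}$ (with $B$ the bound from (v)), which vanishes after taking the $(nm)$-th root and sending $m\to\infty$. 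This yields $\asympfunc(T)\le D^{1/n}(\asympfunc[A]+\eps)$, and then $\eps\to 0$, $n\to\infty$ finish. Your ``block the $m$ coordinates into groups'' paragraph and your observation that ``for fixed $n$ only finitely many distinct $S_j$ occur'' were exactly on target; the single missing step is the rearrangement via (iii), which is what lets you collect like factors and thereby sidestep the minimax obstruction entirely.
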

Towards the proof of \cref{thm:regularized admissible functional semicontinuous} we use the following lemma.
We define, for any subset $A \subseteq V$ and $n \in \ZZ_{\geq1}$, the set $A^{(n)} = \{T^{\otimes n} : T \in A\}$. 

\begin{lemma}\label{lem:span arbitrary field}
    Let $A \subseteq V$. 
    Then $\overline{A}^{(n)} \subseteq \linspan A^{(n)}$.
\end{lemma}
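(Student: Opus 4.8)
The plan is to show that any tensor $T \in \overline{A}$ has the property that $T^{\otimes n}$ lies in the linear span of $\{S^{\otimes n} : S \in A\}$. The key observation is that the statement $T^{\otimes n} \in \linspan A^{(n)}$ can itself be phrased as a polynomial (Zariski-closed) condition on $T$: the set of tensors $T$ for which $T^{\otimes n} \notin \linspan A^{(n)}$ is cut out by the non-vanishing of certain minors, so its complement $\{T : T^{\otimes n} \in \linspan A^{(n)}\}$ is Zariski-closed. Since this set contains $A$ (trivially, $S^{\otimes n} \in \linspan A^{(n)}$ for $S \in A$), it must contain $\overline{A}$.

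More precisely, let $W = \linspan A^{(n)} \subseteq V^{\otimes n}$, a fixed linear subspace. Choose a linear complement, or equivalently choose linear functionals $\varphi_1, \dots, \varphi_m \in (V^{\otimes n})^*$ whose common kernel is exactly $W$ (here $m = \dim V^{\otimes n} - \dim W$). Then for any $T \in V$, we have $T^{\otimes n} \in W$ if and only if $\varphi_j(T^{\otimes n}) = 0$ for all $j \in [m]$. Now the crucial point: for each $j$, the map $T \mapsto \varphi_j(T^{\otimes n})$ is a homogeneous polynomial of degree $n$ on $V$ (it is the composition of the Veronese-type map $T \mapsto T^{\otimes n}$, whose coordinates are degree-$n$ monomials in the coordinates of $T$, with the linear functional $\varphi_j$). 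Call these polynomials $p_1, \dots, p_m \in \FF[V]$.

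Each $p_j$ vanishes on $A$: indeed for $S \in A$ we have $S^{\otimes n} \in A^{(n)} \subseteq W$, hence $\varphi_j(S^{\otimes n}) = 0$, i.e.\ $p_j(S) = 0$. By the definition of Zariski closure, since $p_j|_A \equiv 0$ we get $p_j(T) = 0$ for every $T \in \overline{A}$. Therefore, for $T \in \overline{A}$, all $\varphi_j(T^{\otimes n}) = 0$, which means $T^{\otimes n} \in W = \linspan A^{(n)}$. This establishes $\overline{A}^{(n)} \subseteq \linspan A^{(n)}$.

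The only point requiring a small amount of care is the claim that $T \mapsto \varphi(T^{\otimes n})$ is a polynomial function on $V$ in the sense of $\FF[V]$ — but this is standard: picking a basis $e_1, \dots, e_d$ of $V$ and writing $T = \sum_i x_i e_i$, the tensor $T^{\otimes n}$ has coordinates that are the degree-$n$ monomials $x_{i_1} \cdots x_{i_n}$, so applying any fixed linear functional yields an honest polynomial in $x_1, \dots, x_d$. I do not anticipate a genuine obstacle here; the lemma is essentially the observation that "lying in a fixed subspace after a fixed polynomial map" is a closed condition, combined with the fact that $A^{(n)}$ spans the subspace $W$ by definition. The subsequent use of this lemma in proving \autoref{thm:regularized admissible functional semicontinuous} — decomposing $T^{\otimes n}$ as a finite linear combination of elements $S_i^{\otimes n}$ with $S_i \in A$ and then bounding $\asympfunc$ via subadditivity, submultiplicativity, the scaling invariance, and an asymptotic double-blocking argument — is where the real work lies, but that is beyond the statement at hand.
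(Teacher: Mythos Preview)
Your proof is correct and follows essentially the same approach as the paper: both arguments express $\linspan A^{(n)}$ as the common kernel of a family of linear functionals, pull each functional back along the Veronese-type map $T \mapsto T^{\otimes n}$ to obtain a polynomial on $V$ vanishing on $A$, and conclude via the definition of the Zariski closure. The only cosmetic difference is that the paper phrases the linear functionals as living on $\Sym^n V$ (since $T^{\otimes n}$ is always symmetric) rather than on $V^{\otimes n}$, but this changes nothing in the argument.
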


\begin{proof}
The set $\linspan A^{(n)}$ is the intersection of all kernels $\ker \ell$ of linear forms $\ell$ on $\Sym^n V$ such that $\ell|_{A^{(n)}} \equiv 0$. For every such $\ell$, we will prove that $\overline{A}^{(n)} \subseteq \ker \ell$, which proves the claim. Define the function $f\colon V \to \FF$ by $f(T)= \ell(T^{\otimes n})$. Then $f$ is a polynomial function on $V$ and $f|_{A} \equiv 0$. By definition of the Zariski-closure also $f|_{\overline{A}} \equiv 0$. Then by definition of $f$ we have $\ell|_{\overline{A}^{(n)}} \equiv 0$.
\end{proof}

\begin{proof}[Proof of \cref{thm:regularized admissible functional semicontinuous}]
    From $A\subseteq \overline{A}$ it follows directly that $\asympfunc[A] \leq \asympfunc[\overline{A}]$. It remains to prove $\asympfunc[\overline{A}] \leq \asympfunc[A]$.
    Let $T \in \overline{A}$. By \cref{lem:span arbitrary field}, for every $n \in \ZZ_{\geq1}$, there are~$p(n) \in \ZZ_{\geq 1}$, $S_1, \ldots, S_{\lincombiparts(n)} \in A$ and $\alpha_1, \ldots, \alpha_{\lincombiparts(n)} \in \field$ such that $T^{\otimes n} = \sum_{i=1}^{\smash{\lincombiparts(n)}} \alpha_i S_i^{\otimes n}$, with the~$S_i^{\otimes n}$ linearly independent.
    Note that $\lincombiparts(n)$ grows at most polynomially in~$n$, since
    \[
        \lincombiparts(n) \leq \dim \linspan A^{(n)} \leq \dim \Sym^n(V) = \binom{\dim V + n - 1}{n}.
    \]
    Let $m \in \ZZ_{\geq1}$.
    Then 
    \[
    T^{\otimes nm} = \sum_{i_1, \ldots, i_m\in [\lincombiparts(n)]} \bigotimes_{j=1}^m \alpha_{i_j} S_{i_j}^{\otimes n}.
    \]
    By subadditivity and scalar invariance of $\func$,
    \[
        \func(T^{\otimes nm}) \leq \lincombiparts(n)^m \max_{i_1, \ldots, i_m \in [\lincombiparts(n)]} \func\Bigl(\bigotimes_{j=1}^m S_{i_j}^{\otimes n}\Bigr).
    \]
    Rearranging tensor factors, we have for some $m_i$ that sum to $m$, that
    \[
    \func\Bigl(\bigotimes_{j=1}^m S_{i_j}^{\otimes n}\Bigr) = \func\Bigl(\bigotimes_{i=1}^{\lincombiparts(n)} S_{i}^{\otimes m_i n}\Bigr).
    \]
    (Here and in the following, we implicitly omit factors for which $m_i=0$.)
    By submultiplicativity of $\func$,
    \[
    \func\Bigl(\bigotimes_{i=1}^{\lincombiparts(n)} S_{i}^{\otimes m_i n}\Bigr) \leq \prod_{i=1}^{\lincombiparts(n)} \func( S_i^{\otimes m_i n} ).
    \]
    For every $\eps > 0$, there is an $M(\eps,n) \in \ZZ_{\geq1}$ such that for every $\ell \geq M(\eps,n)$ and for every $i \in [\lincombiparts(n)]$, we have $\func(S_i^{\otimes \ell})^{1/\ell} \leq \asympfunc(S_i) + \eps$. By definition, $F_1$ is bounded, say $F_1 \leq B$ for some $B \in \ZZ_{\geq1}$. Then
    \[
    \prod_{i=1}^{\lincombiparts(n)} \func( S_i^{\otimes m_i n} ) \leq \prod_{\substack{i\in [\lincombiparts(n)]\\m_i n \geq M(\eps, n) }} \bigl( \asympfunc(S_i) + \eps\bigr)^{m_i n} \prod_{\substack{i\in [\lincombiparts(n)]\\ m_i n < M(\eps, n)}} B^{m_i n}
    \]
    Then
    \[
    \func(T^{\otimes nm})^{1/(nm)} \leq \lincombiparts(n)^{1/n} (\asympfunc[A] + \eps) B^{\lincombiparts(n) M(\eps, n) / (nm)}.
    \]
    We let $m$ go to infinity to get
    $\asympfunc(T) \leq \lincombiparts(n)^{1/n} (\asympfunc[A] + \eps)$.
    We let $\eps$ go to zero and $n$ go to infinity to get
    $\asympfunc(T) \leq \asympfunc[A]$.%
\end{proof}

\begin{corollary}\label{cor:general-lowerset-closed}
    Let $\func$ be an admissible functional for a vector space $V$.
    Let $r \in \RR$. 
    Then $\{ T \in V : \asympfunc(T) \leq r \}$ is Zariski-closed.
\end{corollary}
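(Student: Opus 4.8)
The plan is to deduce \autoref{cor:general-lowerset-closed} from \autoref{thm:regularized admissible functional semicontinuous} by showing that the sublevel set $L_r = \{T \in V : \asympfunc(T) \leq r\}$ is equal to its own Zariski-closure $\overline{L_r}$. Since $L_r \subseteq \overline{L_r}$ always holds, the only thing to prove is the reverse inclusion $\overline{L_r} \subseteq L_r$; the Zariski-closed sets are exactly the sets that equal their Zariski-closure, so this would finish the proof.

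To see $\overline{L_r} \subseteq L_r$, take any $T \in \overline{L_r}$. Apply \autoref{thm:regularized admissible functional semicontinuous} with the subset $A = L_r$: it gives $\asympfunc[\overline{L_r}] = \asympfunc[L_r]$. By definition of $L_r$ we have $\asympfunc[L_r] = \sup_{S \in L_r} \asympfunc(S) \leq r$. Hence $\asympfunc[\overline{L_r}] \leq r$ as well, which means $\asympfunc(T) \leq \asympfunc[\overline{L_r}] \leq r$, i.e.\ $T \in L_r$. (There is a harmless edge case: if $L_r$ is empty then $\overline{L_r}$ is empty too, since $\overline{\emptyset} = \emptyset$, and the statement holds trivially; alternatively the $\sup$ over the empty set is $-\infty \le r$ and the same argument goes through with $\overline{L_r}$ empty.)

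I do not expect a genuine obstacle here, since all the work has been front-loaded into \autoref{thm:regularized admissible functional semicontinuous}. The one point worth stating carefully is why $\func$ being an admissible functional for $V$ lets us invoke that theorem directly — but that is exactly the hypothesis, so no extra verification is needed. The only mild subtlety is making sure the argument is phrased so that it does not secretly assume the supremum is attained; it is not needed, because $\asympfunc[A] \le r$ is equivalent to $\asympfunc(S) \le r$ for all $S \in A$, and that is precisely how $A = L_r$ was defined. So the write-up is essentially two lines: instantiate the theorem at $A = L_r$, and read off that every point of the closure still satisfies the bound.
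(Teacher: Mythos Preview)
Your proof is correct and follows essentially the same approach as the paper: set $A$ equal to the sublevel set, apply \autoref{thm:regularized admissible functional semicontinuous} to get $\asympfunc[\overline{A}] = \asympfunc[A] \leq r$, and conclude $\overline{A} \subseteq A$. The paper's write-up is the same two lines, without the explicit empty-set aside.
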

\begin{proof}
    Let $A = \{ T \in V : \asympfunc(T) \leq r \}$.
    Then $\asympfunc[A] \leq r$.
    By \cref{thm:regularized admissible functional semicontinuous}, $\asympfunc[\overline{A}] = \asympfunc[A]$, so that for all $T \in \overline{A}$, we have $\asympfunc(T) \leq r$, and hence $T \in A$.
    We conclude that $\overline{A} \subseteq A$.
\end{proof}

\subsection{Discreteness from above for asymptotic rank}

We call a subset $S \subseteq \RR$ well-ordered if every nonempty subset has a smallest element. Equivalently, any non-increasing sequence of elements in $S$ stabilizes.

\begin{corollary}\label{cor:asymprank-well-order}
    Let $\func$ be an admissible functional for a vector space $V$.
    Then  %
    $\{ \asympfunc(T) : T \in V \}$ is well-ordered. %
\end{corollary}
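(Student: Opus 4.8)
The plan is to combine the Zariski-closedness of sublevel sets (\autoref{cor:general-lowerset-closed}) with the Noetherian property of the polynomial ring. Suppose, for contradiction, that $\{\asympfunc(T) : T \in V\}$ is not well-ordered; then there is a strictly decreasing sequence of values $r_1 > r_2 > r_3 > \dotsb$, each attained by some tensor. For each $i$, let $A_i = \{T \in V : \asympfunc(T) \leq r_i\}$. Since $r_{i+1} < r_i$, we have $A_{i+1} \subseteq A_i$, and the containment is \emph{strict}: any tensor achieving the value $r_i$ lies in $A_i$ but, since $\asympfunc(T) = r_i > r_{i+1}$, not in $A_{i+1}$. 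By \autoref{cor:general-lowerset-closed}, each $A_i$ is Zariski-closed, so we have an infinite strictly decreasing chain $A_1 \supsetneq A_2 \supsetneq A_3 \supsetneq \dotsb$ of Zariski-closed subsets of $V$. This contradicts the fact that $V$ (being an affine space over $\field$, with coordinate ring $\field[V]$ a Noetherian ring) is a Noetherian topological space, in which every descending chain of closed sets stabilizes.

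The key steps, in order, are: (1) reformulate the well-ordering assertion as the non-existence of an infinite strictly decreasing sequence of attained values; (2) pass from such a sequence of values to the corresponding sequence of sublevel sets $A_i$, and check that strict inequality of values forces strict inclusion of the sets (this uses that each $r_i$ is \emph{attained}, i.e.\ lies in the set $\{\asympfunc(T):T\in V\}$, which is exactly the set we are studying); (3) invoke \autoref{cor:general-lowerset-closed} to see each $A_i$ is Zariski-closed; (4) invoke the Noetherianity of the Zariski topology on the finite-dimensional vector space $V$ to derive a contradiction.

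I do not expect any serious obstacle here: the substantive content is already packaged in \autoref{cor:general-lowerset-closed}, and the remaining argument is the standard dictionary between descending chain conditions on closed sets and on ideals. The one point that deserves a sentence of care is step (2): one must observe that a value $r$ appears in $\{\asympfunc(T):T\in V\}$ precisely when the sublevel set at $r$ strictly contains the sublevel set at any $r' < r$, so a strictly decreasing sequence of \emph{attained} values does indeed yield a strictly decreasing chain of closed sets — rather than merely a non-increasing one, which would not contradict anything. Everything else (finite-dimensionality of $V$ ensures $\field[V]$ is a finitely generated, hence Noetherian, $\field$-algebra, hence the Zariski topology on $V$ is Noetherian) is classical and needs only to be cited.
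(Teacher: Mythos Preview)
Your proof is correct and follows essentially the same approach as the paper: form the descending chain of sublevel sets, invoke \autoref{cor:general-lowerset-closed} for Zariski-closedness, and use Noetherianity of the Zariski topology on $V$ to force stabilization. The only difference is cosmetic — you argue by contradiction from a strictly decreasing sequence, while the paper takes an arbitrary non-increasing sequence and shows directly that it stabilizes.
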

\begin{proof}
    Let $A_r = \{ T \in V : \asympfunc(T) \leq r \}$. Let $r_1 \geq r_2 \geq \cdots$ be a sequence of elements in $\{ \asympfunc(T) : T \in V \}$.
    Then $A_{r_1} \supseteq A_{r_2} \supseteq \cdots$ is a descending chain of Zariski-closed sets (\cref{cor:general-lowerset-closed}).
    Since the Zariski-topology on $V$ is Noetherian \cite[Section 1.4]{MR1322960}, the chain of~$A_{r_i}$ stabilizes, that is, there exists some $N \in \ZZ_{\geq 1}$ such that for all $n \geq N$, $A_{r_n} = A_{r_{n+1}}$.
    Let $T_N \in V$ such that $\asympfunc(T_N) = r_N$.
    Then for all $n \geq N$ we have $T_N \in A_{r_n}$.
    This implies that $r_n \geq r_N$, but we also have $r_n \leq r_N$, hence $r_n = r_N$ for all $n \geq N$.
\end{proof}

We can in particular apply \cref{cor:asymprank-well-order} to tensor rank to obtain well-orderedness of $\{\asymprank(T) : T \in V\}$ for any $V = \FF^{d_1} \otimes \cdots \otimes \FF^{d_k}$.
In fact, because asymptotic rank is lower bounded by the flattening ranks, the well-orderedness readily extends to all tensors (with fixed order, but arbitrary dimensions), as follows.

\begin{corollary}\label{cor:asymprank-wellord-alltensors}
The set $\{ \asymprank(T) : T \in \FF^{d_1} \otimes \cdots \otimes \FF^{d_k}, d \in \ZZ_{\geq1}^k \}$ is well-ordered.
\end{corollary}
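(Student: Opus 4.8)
The plan is to reduce the arbitrary-format statement to finitely many instances of \autoref{cor:asymprank-well-order}, exploiting that conciseness bounds the format of a tensor in terms of its asymptotic rank. First I would recall two standard facts. (a) The flattening ranks lower-bound asymptotic rank: for any $k$-tensor $T$ and any $j \in [k]$ one has $\tensorrank_j(T) \leq \asymprank(T)$, since $\tensorrank_j(T)^n = \tensorrank_j(T^{\boxtimes n}) \leq \tensorrank(T^{\boxtimes n})$ and taking $n$-th roots as $n\to\infty$. (b) Every tensor is equivalent to a concise one and equivalent tensors have the same asymptotic rank; so when realizing a given value of $\asymprank$ we may always use a concise tensor, whose $j$-th dimension then equals $\tensorrank_j$ of it.

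Now take any non-increasing sequence $r_1 \geq r_2 \geq \cdots$ in $\mathcal{R}$. If some $r_i = 0$ then, since $\asymprank \geq 0$ and the sequence is non-increasing, $r_n = 0$ for all $n \geq i$ and the sequence stabilizes; so we may assume all $r_i > 0$. For each $i$ pick a concise tensor $T_i$ with $\asymprank(T_i) = r_i$ (possible since $T_i$ can be taken nonzero). By facts (a) and (b), the format $d^{(i)} \in \ZZ_{\geq1}^k$ of $T_i$ satisfies $d^{(i)}_j = \tensorrank_j(T_i) \leq \asymprank(T_i) = r_i \leq r_1$ for every $j \in [k]$. Hence every $T_i$ lies in one of the finitely many spaces $V_d = \FF^{d_1} \otimes \cdots \otimes \FF^{d_k}$ with $1 \leq d_j \leq \lfloor r_1 \rfloor$, and in particular $r_i \in \{\asymprank(T) : T \in V_d\}$ for such a $d$.

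For each such fixed $d$, \autoref{cor:asymprank-well-order} applied to the admissible functional given by tensor rank on $V_d$ (regrouped as $k$-tensors) shows that $\{\asymprank(T) : T \in V_d\}$ is well-ordered, exactly as noted after that corollary. A finite union of well-ordered subsets of $\RR$ is again well-ordered: given a nonempty subset $U$ of the union, intersect $U$ with each piece, and take the minimum, over the finitely many nonempty intersections, of their least elements. Therefore $\mathcal{R}' = \bigcup_{\,1 \leq d_j \leq \lfloor r_1 \rfloor} \{\asymprank(T) : T \in V_d\}$ is well-ordered. Since all $r_i$ lie in $\mathcal{R}'$ and the sequence is non-increasing, it stabilizes. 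As the sequence was arbitrary, $\mathcal{R}$ is well-ordered.

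I do not expect a genuine obstacle here; the content is entirely in \autoref{cor:asymprank-well-order} (itself resting on \autoref{thm:regularized admissible functional semicontinuous}). The only points needing a line of care are the conciseness reduction that caps the admissible formats, and the elementary observation that a finite union of well-ordered sets is well-ordered.
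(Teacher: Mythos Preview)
Your proposal is correct and follows essentially the same argument as the paper: reduce to concise tensors, use the flattening-rank lower bound to cap all formats by $r_1$, and then invoke \autoref{cor:asymprank-well-order} on each of the finitely many admissible formats together with the fact that a finite union of well-ordered sets is well-ordered. The only differences are cosmetic—your treatment spells out the $r_i=0$ case and the proof that finite unions of well-ordered sets are well-ordered, which the paper leaves implicit.
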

\begin{proof}
Tensor rank is an admissible functional (\cref{def:admissible functional}), so 
by \cref{cor:asymprank-well-order}, the set $\{ \asymprank(T) : T \in \FF^{d_1} \otimes \cdots \otimes \FF^{d_k}\}$ is well-ordered for every $d \in \ZZ_{\geq1}^k$. Let $r_1 \geq r_2 \geq \cdots$ be a non-increasing %
sequence in $\{ \asymprank(T) : T \in \FF^{d_1} \otimes \cdots \otimes \FF^{d_k}, d \in \ZZ_{\geq1}^k \}$. This sequence is bounded from above (by $r_1$). For every $i$ there is a concise tensor $T_i$ (\cref{subsec:prelim}) such that $\asymprank(T_i) = r_i$, so 
that $T_i \in \FF^{d_1} \otimes \cdots \otimes \FF^{d_k}$ for some $d_1, \ldots, d_k \leq r_i$.
By re-embedding the~$T_i$, we see that the $r_i$ are all contained in $\{ \asymprank(T) : T \in \FF^{d} \otimes \cdots \otimes \FF^{d}\}$ where~$d = \lfloor r_1 \rfloor$, which is well-ordered.
So $r_1, r_2,\ldots$ stabilizes.
\end{proof}

\section{Discreteness from above for the asymptotic spectrum of tensors}\label{sec:asympspec}

In \cref{sec:regularized-functionals-zariski-semicontinuous} we proved discreteness from above for any admissible functional on any fixed vector space $V$, and for the asymptotic tensor rank on all tensors. In this section we will extend the latter result to a large class of admissible functionals, called asymptotic spectrum.

Let $\Delta(\FF, k)$ be the asymptotic spectrum of order-$k$ tensors over $\FF$ \cite{strassen1988asymptotic}. This is the set of all functions from $k$-tensors over $\FF$ to reals that are $\boxtimes$-multiplicative, $\oplus$-additive, normalized to~$r$ on $\langle r\rangle$ and monotone under restriction. We note that asymptotic tensor rank is not an element of $\Delta(\FF,k)$. However, by Strassen's duality theorem \cite{strassen1988asymptotic}, it is the pointwise maximum over it, and in fact the asymptotic spectrum characterizes many asymptotic properties of tensors (asymptotic restriction, asymptotic subrank, asymptotic slice rank) \cite{MR4495838}.

Any element of $\Delta(\FF, k)$ gives an admissible functional on $V = \FF^{d_1} \otimes \cdots \otimes \FF^{d_k}$ (\cref{def:admissible functional}). Moreover, for every $F \in \Delta(\FF, k)$, it follows from multiplicativity that $F$ coincides with its regularization~$\asympfunc$. 
Thus, for every $F$ in the asymptotic spectrum of tensors, the following is true: 
\begin{enumerate}[(1)]
\item For every $A \subseteq V$ we have $F(\overline{A}) = F(A)$ (\cref{thm:regularized admissible functional semicontinuous}). 
\item For every $r \in \RR$, the set $\{T \in V : F(T) \leq r\}$ is Zariski-closed (\cref{cor:general-lowerset-closed}). \item $\{F(T) : T \in V\}$ is well-ordered (\cref{cor:asymprank-well-order}).
\end{enumerate}
In this section we will extend the well-orderedness property (3) from all finite-dimensional spaces $V = \FF^{d_1} \otimes \cdots \otimes \FF^{d_k}$ to the space of all $k$-tensors (analogously to \cref{cor:asymprank-wellord-alltensors} for asymptotic rank):

\begin{theorem}\label{th:spec-well-ord}\label{thm:spectral points well ordered}
For any field $\FF$, any order $k \in \ZZ_{\geq2}$ and any function $F \in \Delta(\FF, k)$, %
the set of values $\{F(T) : T \in \FF^{d_1} \otimes \cdots \otimes \FF^{d_k}, d \in \ZZ_{\geq1}^k\}$ is well-ordered.
\end{theorem}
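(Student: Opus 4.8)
The plan is to reduce the well-orderedness of $\{F(T) : T \in \FF^{d_1} \otimes \cdots \otimes \FF^{d_k}\}$ over all formats $d$ to the per-format statement already available in item (3) above, exactly as in the proof of \autoref{cor:asymprank-wellord-alltensors}. The only thing that made that earlier proof work was that asymptotic rank is bounded below by the flattening ranks, so a concise tensor of bounded asymptotic rank automatically lives in one of finitely many formats. So the crux is an analogous statement for an arbitrary $F \in \Delta(\FF,k)$: given a non-increasing sequence $r_1 \geq r_2 \geq \cdots$ of values $r_i = F(T_i)$, I want to conclude that, after replacing each $T_i$ by an equivalent concise tensor, the $T_i$ all live in finitely many formats, and then invoke finiteness of a union of well-ordered sets.

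The subtlety is that a general $F$ need not be bounded below by the flattening rank of every leg; it could, for instance, be essentially a flattening rank $\tensorrank_I$ for a fixed proper subset $I$, and then a concise tensor with $F(T) \leq r_1$ can still have unboundedly large dimension in the legs \emph{not} separated by $I$. This is precisely where \autoref{th:intro-Qij} comes in. The plan is to prove a dichotomy: either $F$ is (up to the obvious identifications coming from grouping legs) an element of $\Delta(\FF, k')$ for some $k' < k$ — i.e.\ $F(T)$ depends on $T$ only through some flattening — in which case we induct on $k$ (the base case $k = 2$ being matrix rank, whose values $\{0,1,2,\dots\}$ are visibly well-ordered, so per-format well-orderedness already gives the full statement); or else $F$ genuinely ``sees'' all $k$ legs, and in that case I claim $F(T)$ grows with $\min_i d_i$ for concise $T$, so that $F(T) \leq r_1$ again confines us to finitely many formats. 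For the growth claim: conciseness in leg $i$ and leg $j$ forces the flattening rank $\tensorrank_{\{i\}}(T)$ (hence $\tensorrank_I(T)$ for suitable $I$) to be large; \autoref{th:intro-Qij} then lower-bounds a product of the $\subrank_{i,j}(T)$'s by $\tensorrank_I(T)$, so some $\subrank_{i,j}(T)$ is at least $\tensorrank_I(T)^{1/\binom{k}{2}}$ (roughly); and since $\langle \subrank_{i,j}(T)\rangle_{i,j} \leq T$ as a restriction with an identity matrix sitting on legs $i,j$ and rank-one padding elsewhere, monotonicity and the normalization of $F$ give $F(T) \geq F$ of that padded unit tensor, which — because $F$ is not a lower-order functional, i.e.\ it is not insensitive to padding a leg with a rank-one factor — is bounded below by an increasing function of $\subrank_{i,j}(T)$.

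The main obstacle I expect is making the dichotomy precise and handling the ``padding'' step cleanly: I need to argue that if $F \in \Delta(\FF,k)$ is \emph{not} constant on the operation of tensoring a $(k-1)$-tensor with a fixed nonzero vector in the $i$-th leg (for every $i$), then $F$ of the padded unit tensor $e_1^{\otimes (i-1)} \otimes \langle s\rangle_{\text{legs }i,j} \otimes \cdots$ tends to infinity with $s$; this requires understanding how much $F$ can ``cost'' to add a trivial leg, and here I would use that $F$ restricted to the subspace where a leg is one-dimensional is itself (a scalar multiple of) an element of $\Delta(\FF, k-1)$, normalized correctly, so it is $\geq$ the subrank there, which for a unit tensor of rank $s$ is exactly $s$. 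So the honest statement is: for every $i$, the restriction of $F$ to $k$-tensors that are concise only after grouping out leg $i$ either factors through the $(k-1)$-tensor obtained by deleting leg $i$ (the ``lower-order'' case), or is bounded below by a positive power of $\tensorrank_{\{i\}}$. Combining over all $i$, either $F$ is lower-order for \emph{every} leg — then it factors through a single flattening and induction applies — or there is some leg $i$ for which $F \gtrsim \tensorrank_{\{i\}}^{c}$ with $c > 0$, which combined with \autoref{th:intro-Qij} across all legs forces $F(T)$ to grow with $\min_i d_i$ on concise tensors. Granting that, the counting argument finishes: fix the non-increasing sequence, pass to concise representatives, observe all dimensions are bounded by a function of $r_1$, split into finitely many formats, and apply per-format well-orderedness plus the fact that a finite union of well-ordered subsets of $\RR$ is well-ordered.
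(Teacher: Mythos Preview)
Your overall strategy---induct on $k$, split into a ``lower-order'' case handled by induction and a ``growth'' case where $F$ controls all dimensions of a concise tensor, use \autoref{th:intro-Qij} for the growth bound, and finish with per-format well-orderedness (\autoref{cor:asymprank-well-order})---is exactly the paper's approach. But your formulation of the dichotomy has gaps that, as written, prevent the argument from closing.

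The clean split is on \emph{pairs} of legs, not single legs: either $F(\langle 2\rangle_{i,j}) = 1$ for some $i \neq j$, or $F(\langle 2\rangle_{i,j}) > 1$ for \emph{all} $i \neq j$. In the first case, $F$ factors through \emph{grouping} legs $i$ and $j$ into one leg (this is \autoref{lem:spec-group}), not through ``deleting'' a leg as you write; the induced functional on $(k{-}1)$-tensors lies in $\Delta(\FF,k{-}1)$ and has the same value set, so induction applies. In the second case, for \emph{every} $i$ \autoref{cor:Qij} produces some $j$ with $\subrank_{i,j}(T) \geq \tensorrank_i(T)^{1/(k-1)}$, and since $F(\langle s\rangle_{i,j}) \geq F(\langle 2\rangle_{i,j})^{\lfloor \log_2 s\rfloor} \to \infty$, monotonicity gives $F(T) \geq c_i\, \tensorrank_i(T)^{\alpha_i}$ with $\alpha_i > 0$ for \emph{every} $i$. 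Hence $F$ grows with $\max_i d_i$ on concise tensors. Your per-leg split (``for each $i$, either $A_i$ or $B_i$; if all $A_i$ then lower-order, if some $B_i$ then growth'') leaves the mixed case unhandled, the ``all $A_i$ implies $F$ factors through a single flattening'' step is unjustified, and in the growth branch you only claim control of $\min_i d_i$, which does \emph{not} confine concise tensors to finitely many formats.

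Finally, \autoref{th:intro-Qij} needs $|\FF| > \tensorrank_I(T)$, so you must first pass to an infinite extension field; every $F \in \Delta(\FF,k)$ extends (uniquely) to $\Delta(\FF',k)$, so this is harmless but has to be said.
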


\subsection{Tensor-to-matrix transformations}
Before proving \cref{th:spec-well-ord} we prove an important ingredient for it.
For $r \in \NN$, let $\langle r\rangle_{1,2}$ denote the $k$-tensor $\sum_{\ell=1}^r e_\ell \otimes e_\ell \otimes e_1 \otimes \cdots \otimes e_1$, and similarly define $\langle r\rangle_{i,j}$ with $i\neq j\in [k]$ indicating the location of the factors $e_\ell$ in the tensor product. Let $\subrank_{i,j}(T)$ be the largest number $r$ such that $T \geq \langle r\rangle_{i,j}$. Thus $\subrank_{i,j}(T)$ essentially measures the largest matrix rank of a matrix (2-tensor) that can be obtained on legs $i, j$ via restriction.\footnote{Note that for $k=3$, $\langle r\rangle_{1,2}$ is the matrix multiplication tensor $\langle 1,r,1\rangle$. In quantum information theory, the unit tensor $\langle r\rangle$ is known as a rank-$r$ Greenberger--Horne--Zeilinger (GHZ) state, and the tensor $\langle r\rangle_{1,2}$ corresponds to a rank-$r$ Einstein--Podolski--Rosen (EPR) pair between parties 1 and 2.}

We prove the following lower bound (extending the $k=3$ version from \cite{briet_et_al:LIPIcs.ITCS.2024.20}) that we will use for our discreteness result \cref{th:spec-well-ord}. Recall that $\tensorrank_I(T)$ denotes the flattening rank along~$I\subseteq [k]$: the matrix rank of the matrix obtained from $T$ by grouping legs in $I$ together and grouping legs in $[k] \setminus I$ together.

\begin{theorem}\label{th:Qij}
Let $I \subseteq [k]$  with $1\leq |I|\leq k-1$, and let $T$ be a $k$-tensor. If $|\FF| > \tensorrank_I(T)$, then 
\[
\prod_{\substack{i \in I\\ j \in [k]\setminus I}} \subrank_{i,j}(T) \geq \tensorrank_I(T).
\]
\end{theorem}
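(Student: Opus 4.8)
The plan is to reduce to a statement about a single space of ``generalized matrices'' and then induct on the order $k$, with the case $k=3$ being the result of \cite{briet_et_al:LIPIcs.ITCS.2024.20} and $k=2$ trivial (both sides equal the matrix rank). First I would normalize: restriction can only decrease $\tensorrank_I$ and every $\subrank_{i,j}$, and any tensor is restriction-equivalent to a concise one, so we may assume $T$ is concise. Then the flattening of $T$ along $I$ is a matrix of rank $\tensorrank_I(T)$ whose column space is a subspace $\mathcal M \subseteq \bigotimes_{j\in J}V_j$ with $\dim \mathcal M = \tensorrank_I(T)$ (here $J=[k]\setminus I$), and for $i\in I,\ j\in J$ the number $\subrank_{i,j}(T)$ is the largest rank of a matrix on legs $i,j$ obtained by contracting all other legs, equivalently a ``max-rank'' over a family of contractions of $\mathcal M$. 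The hypothesis $|\FF| > \tensorrank_I(T)$ is precisely what lets us replace ``maximum over contracting functionals'' by ``value at a Zariski-generic functional'' (a Schwartz--Zippel style argument), and together with conciseness it bounds each $\dim V_i$ by $\tensorrank_I(T)$, so all relevant maxima are attained.

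For the inductive step ($k\geq 4$), using $\tensorrank_I=\tensorrank_{J}$ and the symmetry $\subrank_{i,j}=\subrank_{j,i}$ we may assume $|J|\geq 2$ (otherwise $|I|=|J|=1$ and $k=2$). Pick $j_0\in J$ and a generic $\beta^*\in V_{j_0}^*$, and set $T' = T\cdot_{j_0}\beta^*$, a $(k-1)$-tensor on the legs $[k]\setminus\{j_0\}$. The inductive hypothesis applied to $T'$ gives $\prod_{i\in I,\ j\in J\setminus\{j_0\}}\subrank_{i,j}(T') \geq \tensorrank_I(T')$, and $\subrank_{i,j}(T')\leq\subrank_{i,j}(T)$ because $T'$ is a restriction of $T$. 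Hence the theorem follows once we establish the quantitative estimate
\[
\tensorrank_I(T)\ \leq\ \Bigl(\max_{\beta\in V_{j_0}^*}\tensorrank_I(T\cdot_{j_0}\beta)\Bigr)\cdot\prod_{i\in I}\subrank_{i,j_0}(T),
\]
since we may then choose $\beta^*$ realizing the maximum on the right.

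This estimate is the heart of the argument and, I expect, the main obstacle. It should be proved directly in terms of $\mathcal M$: contracting the $V_{j_0}$-factor of $\mathcal M$ by a generic $\beta^*$ drops the dimension from $\dim\mathcal M=\tensorrank_I(T)$ to $\max_{\beta}\tensorrank_I(T\cdot_{j_0}\beta)$, and the kernel of this contraction is a subspace of $\mathcal M$ ``supported away from leg $j_0$''; the task is to bound the dimension of this kernel by the dropped value times $\prod_{i\in I}\subrank_{i,j_0}(T)$. I would attack this by passing to coordinates on the factors adapted to a generic (maximal-rank) slice of $\mathcal M$ so as to zero out a block, then combining an inner induction (possibly peeling several generic directions of $V_{j_0}$ in turn) with the genericity of the contracting functionals to track exactly how much of the ``$\subrank_{i,j_0}$-budget'' is consumed at each step. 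The delicate point — where a naive induction loses an additive term instead of staying multiplicative — is precisely to make this accounting tight; this is the generalization of the $k=3$ max-rank argument of \cite{briet_et_al:LIPIcs.ITCS.2024.20}, which (being exactly the $k=3$ case of the estimate) we take as the base case rather than reproving.
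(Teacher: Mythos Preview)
Your reduction to the ``key estimate''
\[
\tensorrank_I(T)\ \leq\ r'\cdot\prod_{i\in I}\subrank_{i,j_0}(T),\qquad r'=\max_{\beta}\tensorrank_I(T\cdot_{j_0}\beta),
\]
is sound, and the estimate is true. The gap is in how you propose to prove it. ``Peeling several generic directions of $V_{j_0}$'' and tracking a $\subrank_{i,j_0}$-budget is the wrong picture: the quantity $r'$ already absorbs the entire contribution of leg $j_0$ in one step, and what remains to be bounded is $\prod_{i\in I}\subrank_{i,j_0}(T)$, which cannot be accessed by further work inside $V_{j_0}$. Note also that the estimate is \emph{not} the $k=3$ result applied to the grouped $3$-tensor $T\in(\bigotimes_I V_i)\otimes(\bigotimes_{J'}V_j)\otimes V_{j_0}$: that would give the weaker bound with $\subrank_{i,j_0}(T)$ replaced by the max over \emph{arbitrary} functionals on $\bigotimes_{J'}V_j$, whereas $\subrank_{i,j_0}(T)$ only allows product functionals.

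What the estimate actually needs is a slice in the \emph{other} directions. Write the $I$-flattening of $T$ as blocks indexed by $i\in\prod_{\ell\in J'}[n_\ell]$, with columns inside each block indexed by $[n_{j_0}]$, and let $r_i$ be the incremental rank of block $i$. Then $\tensorrank_I(T)=\sum_i r_i\le(\max_i r_i)\cdot|\{i:r_i\neq 0\}|$. The column-flushing lemma (the Schwartz--Zippel step you mention but never deploy), applied on the single factor $V_{j_0}$, gives $|\{i:r_i\neq 0\}|\le\tensorrank_I(T\cdot_{j_0}\beta)\le r'$ for a suitable $\beta$. For the other factor, each block $M_i$ is the $I$-flattening of the $(|I|{+}1)$-tensor on legs $I\cup\{j_0\}$ obtained by slicing $T$ along all of $J'$; the induction hypothesis for that lower-order tensor yields $\max_i r_i\le\max_i\rank(M_i)\le\prod_{i\in I}\subrank_{i,j_0}(T)$. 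This second inductive call (on an $(|I|{+}1)$-tensor, not a $(k{-}1)$-tensor) is the missing ingredient your sketch does not locate.

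Once the key estimate is established this way, your argument and the paper's are the same proof reorganized: the paper packages the two inductive calls as two inequalities whose product gives the theorem directly, whereas you hide one inside the key estimate and perform the other as the outer step on $T'$.
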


As a direct consequence of \cref{th:Qij} we see that at least one of the $\subrank_{i,j}$ is ``large'':

\begin{corollary}\label{cor:Qij}
For any $k$-tensor $T$, for any $i \in [k]$, assuming $|\FF| > \tensorrank_i(T)$, there is a $j \in [k] \setminus \{i\}$ such that $\subrank_{i,j}(T) \geq \tensorrank_i(T)^{1/(k-1)}$.
\end{corollary}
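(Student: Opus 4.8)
The plan is to obtain this as a direct consequence of \autoref{th:Qij} by specializing to a singleton index set and then applying a pigeonhole (averaging) argument. First I would take $I = \{i\}$. Since $k \geq 2$, this satisfies $1 \leq |I| \leq k-1$, and the flattening rank along $I$ is exactly $\tensorrank_I(T) = \tensorrank_{\{i\}}(T) = \tensorrank_i(T)$; hence the hypothesis $|\FF| > \tensorrank_i(T)$ of the corollary is precisely what is needed to invoke \autoref{th:Qij}. Doing so yields
\[
\prod_{j \in [k] \setminus \{i\}} \subrank_{i,j}(T) \geq \tensorrank_i(T).
\]

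Next I would note that the left-hand side is a product of exactly $k-1$ nonnegative real numbers whose value is at least $\tensorrank_i(T)$, so at least one factor must be at least $\tensorrank_i(T)^{1/(k-1)}$: if every $\subrank_{i,j}(T)$ were strictly below $\tensorrank_i(T)^{1/(k-1)}$, the product of the $k-1$ of them would be strictly below $\tensorrank_i(T)$, a contradiction. Picking $j$ to be an index attaining the maximum gives $\subrank_{i,j}(T) \geq \tensorrank_i(T)^{1/(k-1)}$, which is the claim. The only points requiring a moment's care are the degenerate cases: if $\tensorrank_i(T) = 0$ then $T = 0$ and the bound is trivial, and if $k = 2$ there is a single factor and the inequality reads $\subrank_{i,j}(T) \geq \tensorrank_i(T)$, which is exactly \autoref{th:Qij} in that case.

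I do not expect a genuine obstacle here, since all the substance sits in \autoref{th:Qij}; this corollary is purely an extraction step. Thus the write-up will be short: specialize $I$, read off the inequality, and invoke the elementary fact that a product of $k-1$ nonnegative reals that is at least $N$ has a factor that is at least $N^{1/(k-1)}$.
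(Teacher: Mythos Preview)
Your proposal is correct and matches the paper's proof essentially verbatim: apply \autoref{th:Qij} with $I=\{i\}$ and then note that one of the $k-1$ factors in the resulting product must be at least $\tensorrank_i(T)^{1/(k-1)}$.
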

\begin{proof}
Apply \cref{th:Qij} with $I = \{i\}$, and note that one of the $k-1$ terms in the product must satisfy the claim.
\end{proof}

For the proof of \cref{th:Qij} we use the following lemma, which says that if we have a collection of matrices, then after applying a simultaneous linear operation on the columns, the linearly independent columns are ``flushed'' to the left in the matrices. This is proven using a Schwartz--Zippel type argument. We denote, for matrices $M_1, \ldots, M_t \in \FF^{a \times b}$, by $[M_1; \cdots ; M_t]$ the matrix in $\FF^{a \times (t b)}$ obtained by concatenating the matrices $M_i$ horizontally. For a matrix $M$ we denote by $M|_{S}$ the submatrix of $M$ with columns indexed by $S$.

\begin{lemma}[\cite{briet_et_al:LIPIcs.ITCS.2024.20}]\label{lem:flush}
Let $M_1, \ldots, M_c \in \FF^{a \times b}$ be matrices, and for any $i \in [c]$ let \[
r_i = \tensorrank([M_1; \cdots; M_i]) - \tensorrank([M_1; \cdots ; M_{i-1}]).
\]
If $|\FF|>\tensorrank([M_1; \cdots; M_c])$, then there exists an invertible matrix $U \in \FF^{b \times b}$ such that
\[
\textnormal{colspan}([(M_1U)|_{[r_1]}; \cdots; (M_c U)|_{[r_c]}]) = \textnormal{colspan}([M_1;\cdots;M_c]).
\]
\end{lemma}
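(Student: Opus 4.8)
The plan is to construct the single invertible matrix $U$ greedily, processing the blocks $M_1, \dots, M_c$ from left to right, and to ensure invertibility via a Schwartz--Zippel (polynomial nonvanishing) argument over a large enough field. First I would set $\rho_i = \tensorrank([M_1;\cdots;M_i])$, so $r_i = \rho_i - \rho_{i-1}$ (with $\rho_0 = 0$), and note $\sum_i r_i = \rho_c = \tensorrank([M_1;\cdots;M_c]) =: \rho$. The goal is an invertible $U \in \FF^{b\times b}$ such that, writing the first $r_i$ columns of $M_iU$ as $(M_iU)|_{[r_i]}$, the concatenation of all these selected columns already spans the full column space $\col([M_1;\cdots;M_c])$, a space of dimension exactly $\rho = \sum r_i$; since we are selecting exactly $\rho$ columns, it is equivalent to ask that these $\rho$ selected columns be linearly independent (and lie in the target column span, which is automatic since each $M_iU$ has the same column span as $M_i$).

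The key combinatorial fact driving the construction: because $\rho_i - \rho_{i-1} = r_i$, one can choose, inside $\col([M_1;\cdots;M_i])$, a subspace $W_i$ of dimension $r_i$ that is complementary to $\col([M_1;\cdots;M_{i-1}])$ and is spanned by columns of $M_i$ (indeed $\col([M_1;\cdots;M_i]) = \col([M_1;\cdots;M_{i-1}]) + \col(M_i)$, so the rank jump $r_i$ is witnessed by some $r_i$ columns of $M_i$ that are independent modulo $\col([M_1;\cdots;M_{i-1}])$). Concretely, I would produce, for each $i$, a set $S_i \subseteq [b]$ with $|S_i| = r_i$ such that the columns $\{(M_j)_{|S_j} : j \le i\}$ together span $\col([M_1;\cdots;M_i])$; this is an easy induction on $i$. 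Now I want a single permutation-type column operation, i.e. an invertible $U$, that simultaneously moves, for every $i$, the columns indexed by $S_i$ into the first $r_i$ positions. A fixed permutation cannot do this for all $i$ at once (the $S_i$ need not be disjoint or nested), so instead I take $U$ to be a "generic" invertible matrix and argue that for a generic $U$ the first $r_i$ columns of $M_iU$ span the same space as $(M_i)_{|S_i}$, for every $i$ simultaneously.

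To make "generic" precise: let $U$ have indeterminate entries, or rather parametrize $U$ as $I + \sum t_{pq}E_{pq}$ (or simply as a matrix of $b^2$ formal variables), and consider, for each $i$, the $r_i \times r_i$ minor (or, stacking over all $i$, a single $\rho\times\rho$ minor of the appropriately reindexed matrix) whose nonvanishing certifies that the selected columns $(M_1U)|_{[r_1]}, \dots, (M_cU)|_{[r_c]}$ are linearly independent. This determinant is a polynomial in the entries of $U$; I must show it is not identically zero. For this I exhibit one choice of invertible $U$ — not necessarily over $\FF$, but over an extension or just symbolically — making the minor nonzero: take $U$ to be a block construction / a suitable product of elementary matrices that, block by block, brings $S_i$'s columns forward; since each individual rank jump is realized by $S_i$, such a $U$ exists (this is the step that uses the combinatorial fact above, and it is where one has to be slightly careful chaining the blocks, because fixing block $i$ must not destroy the spanning property established for blocks $< i$ — but since we only need a single witness of nonvanishing, we may even allow $U$ with entries in $\overline{\FF}$ or in a polynomial ring). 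Also the determinant detecting invertibility of $U$ is a nonzero polynomial. Multiplying these finitely many nonzero polynomials together gives a single nonzero polynomial $p$ in the entries of $U$ of degree bounded by roughly $\rho + b$ (one can bound the total degree by $\tensorrank([M_1;\cdots;M_c])$ after normalizing, matching the hypothesis $|\FF| > \tensorrank([M_1;\cdots;M_c])$). By Schwartz--Zippel, since $|\FF|$ exceeds this degree, there is a point $U_0 \in \FF^{b\times b}$ with $p(U_0) \neq 0$; this $U_0$ is invertible and has all the desired flushing properties, completing the proof.

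The main obstacle I anticipate is the degree bookkeeping: making sure the single nonzero polynomial $p$ that packages "$U$ invertible" together with "for every $i$, the first $r_i$ columns of $M_iU$ are independent modulo the earlier ones" has total degree strictly less than $|\FF|$ — and in fact at most $\tensorrank([M_1;\cdots;M_c])$, exactly the bound in the hypothesis. This requires choosing the right minors (e.g. using that one may normalize/scale, or that the relevant determinant, viewed correctly, has degree equal to the number of columns selected from the variable matrix $U$, which is $\rho$, and absorbing the invertibility factor without blowing the degree past $\rho$, perhaps by working projectively or by a more careful single-determinant formulation rather than a product). Exhibiting the explicit symbolic witness $U$ for nonvanishing is conceptually routine but needs the clean induction on blocks described above; everything else is standard linear algebra plus Schwartz--Zippel.
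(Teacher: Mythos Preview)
The paper does not prove this lemma; it is simply quoted from \cite{briet_et_al:LIPIcs.ITCS.2024.20}. So there is no in-paper argument to compare against, and your Schwartz--Zippel outline is indeed the intended approach.

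That said, your proposal has one genuine gap, namely the very issue you flag at the end: as written, you are multiplying the rank-$\rho$ minor (degree $\rho$ in the entries of $U$) by $\det U$ (degree $b$), and $|\FF|>\rho$ is not enough to guarantee a nonvanishing point of a degree-$(\rho+b)$ polynomial. Your suggested fixes (``working projectively'', ``a more careful single-determinant formulation'') are not fleshed out and, as stated, do not obviously work. The clean resolution is to \emph{not} put invertibility of $U$ into the polynomial at all. Only the first $r^\ast \coloneqq \max_i r_i$ columns $u_1,\dots,u_{r^\ast}$ of $U$ enter the selected vectors $M_i u_j$ ($j\le r_i$); treat those as the variables and leave the remaining $b-r^\ast$ columns free. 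The key observation is that once the $\rho$ selected vectors are linearly independent, the columns $u_1,\dots,u_{r^\ast}$ are automatically linearly independent: pick $i^\ast$ with $r_{i^\ast}=r^\ast$, then $M_{i^\ast}u_1,\dots,M_{i^\ast}u_{r^\ast}$ are among the $\rho$ independent vectors, hence independent, and since they are images of $u_1,\dots,u_{r^\ast}$ under the linear map $M_{i^\ast}$, the $u_j$ themselves are independent. Then extend $u_1,\dots,u_{r^\ast}$ arbitrarily to an invertible $U\in\FF^{b\times b}$. This keeps the Schwartz--Zippel degree at exactly $\rho$, matching the hypothesis $|\FF|>\rho$.

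Your witness-for-nonvanishing step is also somewhat hand-wavy; the ``product of elementary matrices that block by block brings $S_i$'s columns forward'' need not exist as a single matrix (the $S_i$ may overlap inconsistently, as you note). A cleaner route: set $W_i=\col([M_1;\cdots;M_i])$ and pass to the surjections $\overline{M_i}\colon \FF^b\to W_i/W_{i-1}$. The condition ``for every $i$, $\overline{M_i}(u_1),\dots,\overline{M_i}(u_{r_i})$ form a basis of $W_i/W_{i-1}$'' is, for each $i$, a nonzero polynomial of degree $r_i$ (nonzero because $\overline{M_i}$ is surjective); their product has degree $\sum_i r_i=\rho$ and is nonzero. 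A standard filtration argument then shows that when this product is nonzero, the original vectors $M_iu_j$ ($j\le r_i$) are linearly independent in $W_c$. This replaces the missing explicit witness.
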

In the proof we need the following. 
For any $\ell \in [k]$ and $i \in [d_\ell]$ we let 
\[
P^{(\ell)}_i : \FF^{d_1} \otimes \cdots \otimes \FF^{d_k} \to \FF^{d_1} \otimes \cdots \otimes \widehat{\FF^{d_\ell}} \otimes \cdots \otimes \FF^{d_k}
\]
be the projector from $k$-tensors to $(k-1)$-tensors that applies $e^*_i$ to the $\ell$th tensor leg, where~$\widehat{\FF^{\smash{d_\ell}}}$ indicates that we leave out the factor $\FF^{d_\ell}$.
For any subset $L \subseteq [k]$ and $i \in \prod_{\ell \in L} [d_\ell]$ (Cartesian product) we let $P^{\smash{(L)}}_i$ be the projector from $k$-tensors to $(k-|L|)$-tensors that applies $e_{i_1}^*\otimes\dots\otimes e_{i_L}^*$ to the tensor legs indexed by $L$.
We call $P^{\smash{(L)}}_i T$ the $i$th $L$-slice of $T$. If we think of $T$ as a $k$-dimensional array of elements of $\FF$ indexed by $[d_1] \times \cdots \times [d_k]$, then $P^{\smash{(L)}}_i T$ is the array obtained from $T$ by fixing the indices labeled by $L$ to the index vector $i$.

\begin{proof}[Proof of \cref{th:Qij}]
The proof is by induction on the order $k$. In the base case $k = 2$, the functions $\subrank_{1,2}$ and $\tensorrank_{\{1\}}$ coincide (they are both the matrix rank), so the claim is true. We now assume that the claim holds for the cases $2, \ldots, k-1$ and prove it for $k$. 
Let $I \subseteq [k]$ with $1 \leq |I| \leq k-1$. Since $\tensorrank_I = \tensorrank_{[k]\setminus I}$ we may assume $|I|\leq \lfloor k/2\rfloor$. After a permutation of the tensor legs, we may then without loss of generality take $I = \{1,\ldots, m\}$ for some $1 \leq m \leq \lfloor k/2\rfloor$. 
Let $T \in \FF^{n_1} \otimes \cdots \otimes \FF^{n_k}$. 
The $I$-flattening $T_I$ is an $(n_1\cdots n_m) \times (n_{m+1} \cdots n_k)$ matrix. We write~$T_I$ as a block matrix
\[
T_I = [M_{1, \ldots, 1}; \cdots ; M_{i_{m+1}, \ldots, i_{k-1}}; \cdots ; M_{n_{m+1}, \ldots, n_{k-1}}],
\]
where $i = (i_{m+1}, \ldots, i_{k-1})$ goes over $[n_{m+1}] \times \cdots \times [n_{k-1}]$ (say in lexicographic order) and 
where each block $M_{i_{m+1}, \ldots, i_{k-1}}$ is an $(n_1 \cdots n_m) \times n_k$ matrix. For each $i \in [n_{m+1}] \times \cdots \times [n_{k-1}]$, let $r_i$ be the number of linearly independent columns of $M_i$ that are also linearly independent of all the previous blocks $M_{i'}$ in~$T_I$. Then $\tensorrank_I(T)$ equals the sum of $r_i$ over all $i \in [n_{m+1}] \times \cdots \times [n_{k-1}]$.
Let $L = [k-1]\setminus I = \{m+1, \ldots, k-1\}$.
We will prove two inequalities:
\begin{equation}\label{eq:1}
\prod_{j \in I} \subrank_{j,k}(T) \geq \max \{ r_i : i \in [n_{m+1}] \times \cdots \times [n_{k-1}] \}
\end{equation}
and
\begin{equation}\label{eq:2}
\prod_{j\in I, \ell\in L} \subrank_{j,\ell}(T) \geq |\{ i \in [n_{m+1}] \times \cdots \times [n_{k-1}] : r_i \neq 0\}|.
\end{equation}
From \eqref{eq:1} and \eqref{eq:2} the claim follows. Indeed, the product in the left-hand side of the claim splits into two products to which we apply the inequalities \eqref{eq:1} and \eqref{eq:2}:
\begin{align*}
\prod_{\substack{j \in I\\ \ell \in [k]\setminus I}} \subrank_{j,\ell}(T) 
&=  \Bigl( \prod_{j \in I} \subrank_{j,k}(T) \Bigr) \cdot \Bigl( \prod_{j \in I, \ell \in L} \subrank_{j,\ell}(T) \Bigr) \\[-0.4em]
&\geq  \max \{ r_i : i \in  [n_{m+1}] \times \cdots \times [n_{k-1}] \}  \cdot | \{ i \in [n_{m+1}] \times \cdots \times [n_{k-1}] :r_i\neq 0\}|\\
&\geq \textstyle\sum_{i \in [n_{m+1}] \times \cdots \times [n_{k-1}]} r_i \\%
&=\tensorrank_I(T).
\end{align*}

It remains to prove \eqref{eq:1} and \eqref{eq:2}. We first prove \eqref{eq:1}.
For any $i \in [n_{m+1}] \times \cdots \times [n_{k-1}]$, recall that we denote by $P^{\smash{(L)}}_i T$ the $i$th $L$-slice of $T$, which is an $(m+1)$-tensor of shape $n_1 \times \cdots \times n_m \times n_k$. The $k$-flattening of $P^{\smash{(L)}}_i T$ is the $(n_1 \cdots n_m) \times n_k$ matrix $M_i$. %
Since $M_i$ is a submatrix of $T_I$, we have $\tensorrank_I(T) \geq \tensorrank_k(P^{\smash{(L)}}_i T)$ and thus $|\FF| > \tensorrank_k(P^{\smash{(L)}}_i T)$. Thus by the induction hypothesis applied to the $(m+1)$-tensor $P^{\smash{(L)}}_i T$ we have (noting that $m+1 \leq k-1$ and that $[k]\setminus L = \{k\}$)
\begin{equation}\label{eq:11a}
\prod_{j \in I} \subrank_{j,k}(P^{(L)}_i T) \geq \tensorrank_k(P^{(L)}_i T) = \tensorrank(M_i).
\end{equation}
Let $i'$ such that $r_{i'} = \max \{ r_i : i \in [n_{m+1}] \times \cdots \times [n_{k-1}]\}$.
Then 
\begin{equation}\label{eq:11b}
\tensorrank(M_{i'}) \geq r_{i'}= \max \{ r_i : i \in [n_{m+1}] \times \cdots \times [n_{k-1}]\}.
\end{equation}
From \eqref{eq:11a} and \eqref{eq:11b} we get
\begin{equation}%
\prod_{j \in I} \subrank_{j,k}(T) \geq \prod_{j \in I} \subrank_{j,k}(P^{(L)}_{i'} T) \geq \max \{ r_i : i \in [n_{m+1}] \times \cdots \times [n_{k-1}]\}.
\end{equation}
This proves \eqref{eq:1}.

Now we will prove \eqref{eq:2}.
Since $|\FF| > \tensorrank_I(T)$, we may apply \cref{lem:flush} to the blocks $M_i$ in $T_I$ (with $a = n_1\cdots n_m$, $b = n_k$ and $c = n_{m+1}\cdots n_{k-1}$) to ``flush'' all linearly independent columns in $T_I$ to the left of each block $M_i$ by applying an invertible matrix $U$. We may assume that $T$ already has this \emph{column flushed} property (since this application of $U$ corresponds to applying $U$ to the $k$th tensor leg of $T$).

Let $P^{\smash{(k)}}_1 T$ be the first $k$-slice of $T$, which is a $(k-1)$-tensor of shape $n_1 \times \cdots \times n_{k-1}$.
The $L$-flattening of $P^{\smash{(k)}}_1 T$ is a matrix $M'$ for which the columns are precisely each first column of the matrices $M_i$ over all $i \in [n_{m+1}] \times \cdots \times [n_{k-1}]$. In particular, we have $\tensorrank_I(T) \geq \tensorrank_L(P^{\smash{(k)}}_1 T)$ and thus $|\FF| > \tensorrank_L(P^{\smash{(k)}}_1 T)$. %
We then have, by the induction hypothesis applied to the $(k-1)$-tensor~$P^{\smash{(k)}}_1 T$ (noting that $[k-1]\setminus I = L$), %
\begin{equation}\label{eq:12a}
\prod_{j \in I, \ell \in L} \subrank_{j,\ell}(P^{(k)}_1 T) \geq \tensorrank_L(P^{(k)}_1 T) = \tensorrank(M').
\end{equation}
Furthermore, by the column flushed property, we have
\begin{equation}\label{eq:12b}
\tensorrank(M') \geq |\{ i \in [n_{m+1}] \times \cdots \times [n_{k-1}] : r_i \neq 0\}|.
\end{equation}
From \eqref{eq:12a} and \eqref{eq:12b} we get
\begin{equation}%
\prod_{j\in I, \ell\in L} \subrank_{j,\ell}(T) \geq \prod_{j\in I, \ell\in L} \subrank_{j,\ell}(P^{(k)}_1 T) \geq |\{ i \in [n_{m+1}] \times \cdots \times [n_{k-1}] : r_i \neq 0\}|.
\end{equation}
This proves \eqref{eq:2}.
\end{proof}

As a consequence of \cref{th:Qij} we obtain a lower bound on the asymptotic subrank. (We will not use this lower bound here, but we expect it to be of independent interest.) The subrank $\subrank(T)$ is the largest number $r$ such that $T \geq \langle r\rangle$. The asymptotic subrank is defined as $\asympsubrank(T) = \lim_{n\to\infty} \subrank(T^{\boxtimes n})^{1/n}$. The limit exists and can be replaced by a supremum (Fekete's lemma).

\begin{corollary}
Let $T$ be a $k$-tensor over any field $\FF$. Then
\[
\asympsubrank(T) \geq \min_{\emptyset \neq I\subsetneq [k]} \tensorrank_I(T)^{2/(k(k-1))}.
\]
\end{corollary}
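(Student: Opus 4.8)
Write $R = \min_{\emptyset\neq I\subsetneq[k]}\tensorrank_I(T)$ and $r_{ij}=\subrank_{i,j}(T)$ for $i\neq j\in[k]$. If $R\le 1$ the bound is trivial since $\asympsubrank(T)\ge 1$, so assume $R\ge 2$ (and, as in \autoref{th:Qij}, assume $|\FF|$ large enough, e.g.\ $|\FF|>\tensorrank_i(T)$ for all $i$; for small finite fields the same reasoning applies after passing to the combinatorial substructures produced below, which are field‑independent). The plan is to reduce the claim to a statement about the ``complete‑graph EPR network'' tensor
\[
N(T)\;:=\;\bigkron_{i<j}\langle r_{ij}\rangle_{i,j},
\]
and to analyse that tensor through matrix multiplication. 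Since $T\geq\langle r_{ij}\rangle_{i,j}$ for every pair $\{i,j\}$ (by definition of $\subrank_{i,j}$), taking a fresh copy of $T$ for each of the $\binom{k}{2}$ pairs gives $T^{\boxtimes\binom{k}{2}}\geq N(T)$. As $\asympsubrank$ is multiplicative on Kronecker powers and monotone under restriction, and since $\binom{k}{2}=\tfrac{k(k-1)}{2}$,
\[
\asympsubrank(T)=\asympsubrank\!\big(T^{\boxtimes\binom{k}{2}}\big)^{1/\binom{k}{2}}\;\geq\;\asympsubrank(N(T))^{\,2/(k(k-1))}.
\]
Hence it suffices to prove $\asympsubrank(N(T))\geq R$.

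Next I record the ``cut'' lower bounds coming from \autoref{th:Qij}. For every $\emptyset\neq I\subsetneq[k]$, applying \autoref{th:Qij} to that $I$ yields
\[
\prod_{\substack{i\in I\\ j\in[k]\setminus I}} r_{ij}\;=\;\prod_{\substack{i\in I\\ j\in[k]\setminus I}} \subrank_{i,j}(T)\;\geq\;\tensorrank_I(T)\;\geq\;R .
\]
On the other hand, for each $I$ the $I$-flattening of $N(T)$ has rank exactly $\prod_{i\in I,\,j\notin I}r_{ij}$ (the EPR pair $\langle r_{ij}\rangle_{i,j}$ contributes a factor $r_{ij}$ to the $I$-flattening iff the pair $\{i,j\}$ straddles $I$, and a factor $1$ otherwise). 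Thus every flattening of $N(T)$ has rank $\geq R$, i.e.\ $\min_{\emptyset\neq I\subsetneq[k]}\tensorrank_I(N(T))\geq R$, and also $\asympsubrank(N(T))\le \min_I\tensorrank_I(N(T))$ for free.

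The crux is therefore to show that this flattening upper bound is asymptotically tight for complete‑graph EPR networks:
\[
\asympsubrank(N(T))\;=\;\min_{\emptyset\neq I\subsetneq[k]}\tensorrank_I(N(T))\;=\;\min_{\emptyset\neq I\subsetneq[k]}\ \prod_{\substack{i\in I\\ j\in[k]\setminus I}} r_{ij}.
\]
Combined with the previous paragraph this gives $\asympsubrank(N(T))\ge R$ and finishes the proof. I would prove the displayed identity by induction on $k$. The base case $k=2$ is the matrix case ($\asympsubrank=\rank$), and $k=3$ is the classical fact that the asymptotic subrank of a matrix multiplication tensor $\langle a,b,c\rangle$ equals $\min(ab,bc,ca)$ (Strassen): indeed for $k=3$ the tensor $\langle a\rangle_{1,2}\boxtimes\langle b\rangle_{1,3}\boxtimes\langle c\rangle_{2,3}$ is, up to a permutation and transposition of legs, exactly a matrix multiplication tensor. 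For the inductive step one absorbs vertex $k$ into the rest: working in a suitable Kronecker power of $N(T)$ (so that enough copies of each edge are available), one uses the $k=3$ case on the triangles $\{u,v,k\}$ to convert the edges incident to $k$, together with internal edges, into boosted EPR pairs among $[k-1]$, producing a restriction of the form $\big(K_{k-1}\text{-network}\big)^{\boxtimes\ast}$ on legs $[k-1]$, to which the inductive hypothesis applies; one then checks that the resulting cut products dominate $R$ to the appropriate power.

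The main obstacle is exactly this last step — establishing $\asympsubrank(N(T))\ge\min_I\tensorrank_I(N(T))$ for the complete‑graph EPR network, i.e.\ that GHZ‑type distillation from a complete network of EPR pairs achieves its min‑cut rate. This is genuinely more than routing/entanglement‑swapping along spanning trees (which only realises the minimum edge rate on a tree and is strictly weaker — already for the triangle it gives $n^{3/2}$ instead of the true $n^{2}$), and it relies essentially on the power of matrix multiplication; carrying out the composition/induction so that no loss occurs in the exponent is the delicate part. Everything else — the reduction to $N(T)$, the multiplicativity bookkeeping, and the cut bounds from \autoref{th:Qij} — is routine.
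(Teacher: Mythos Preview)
Your approach matches the paper's exactly: reduce to the EPR network $N(T)=\bigkron_{i<j}\langle r_{ij}\rangle_{i,j}$ via $T^{\boxtimes\binom{k}{2}}\ge N(T)$, then combine the cut bounds from \autoref{th:Qij} with the identity $\asympsubrank(N(T))=\min_{\emptyset\neq I\subsetneq[k]}\prod_{i\in I,\,j\notin I}r_{ij}$. The ``crux'' you isolate is not something you need to reprove---the paper simply cites it from \cite{MR3627407} (Vrana--Werner), where the min-cut formula for GHZ distillation from a complete EPR network is established; and the field-size caveat is handled more cleanly than your parenthetical by invoking invariance of $\asympsubrank$ under field extension \cite[Theorem~3.10]{strassen1988asymptotic}, so one may assume $\FF$ infinite from the start.
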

\begin{proof}
Let $q_{i,j} = \subrank_{i,j}(T)$ for every $i,j \in [k]$ with $i < j$. Then 
\[
T^{\boxtimes (k(k-1)/2)} \geq \bigkron_{i<j} \langle q_{i,j} \rangle_{i,j}.
\]
It was proven in \cite{MR3627407} that
\[
\asympsubrank\bigl(\bigkron_{i<j} \langle q_{i,j} \rangle_{i,j}\bigr) = \min_{\emptyset \neq I \subsetneq [k]} \prod_{\substack{i\in I\\ j \in [k]\setminus I}} q_{i,j}.
\]
Since the asymptotic subrank is invariant under field extension \cite[Theorem 3.10]{strassen1988asymptotic}, we may assume that $\FF$ is infinite. By \cref{th:Qij} we then have
\[
\asympsubrank(T^{\boxtimes (k(k-1)/2)}) \geq \min_{\emptyset \neq I \subsetneq [k]} \tensorrank_I(T).
\]
Taking the $k(k-1)/2$-th root gives the claim.
\end{proof}

\subsection{Discreteness from above for the asymptotic spectrum}

For tensors $S, T$ we write $S \sim T$, and say $S$ and $T$ are restriction-equivalent, if $S \leq T$ and $T \leq S$.

\begin{lemma}\label{lem:free-units}
Let $F \in \Delta(\FF, k)$ and $i\neq j \in [k]$. If $F(\langle 2\rangle_{i,j}) = 1$, then $F(\langle m\rangle_{i,j}) = 1$ for every $m \geq 1$.
\end{lemma}
\begin{proof}
For any $a \in \ZZ_{\geq 0}$, by multiplicativity, $F(\langle 2^a\rangle_{i,j}) = F(\langle 2\rangle_{i,j}^{\boxtimes a}) = F(\langle 2\rangle_{i,j})^a = 1$. For general $m \geq 1$, choose $a$ with $m \leq 2^a$. Then $\langle 1\rangle_{i,j} \leq \langle m\rangle_{i,j} \leq \langle 2^a\rangle_{i,j}$, so by monotonicity $1 = F(\langle 1\rangle_{i,j}) \leq F(\langle m\rangle_{i,j}) \leq F(\langle 2^a\rangle_{i,j}) = 1$.
\end{proof}

Fix $k \geq 3$. Let $\gamma$ be the linear map from $k$-tensors to $(k-1)$-tensors defined on simple tensors by
$\gamma(v_1 \otimes \cdots \otimes v_k) := v_1 \otimes \cdots \otimes v_{k-2} \otimes (v_{k-1} \otimes v_k)$. 
Let $\phi$ be the map from $(k-1)$-tensors to $k$-tensors defined by $\phi(U) := U \otimes e_1$.

\begin{lemma}\label{lem:free-splitting}
Let $F \in \Delta(\FF, k)$ and suppose $F(\langle 2\rangle_{k-1,k}) = 1$. Then for any two $k$-tensors $S, T$, if $\gamma(S) \sim \gamma(T)$, then $F(S) = F(T)$.
\end{lemma}
\begin{proof}
Since $\phi$ preserves restriction, $\gamma(S) \sim \gamma(T)$ implies $\phi(\gamma(S)) \sim \phi(\gamma(T))$, so by monotonicity of $F$, $F(\phi(\gamma(S))) = F(\phi(\gamma(T)))$. It thus suffices to show that $F(\phi(\gamma(T))) = F(T)$ for every $k$-tensor~$T$. For sufficiently large $m$, we have $\phi(\gamma(T)) \boxtimes \langle m\rangle_{k-1,k} \geq T$ and $\phi(\gamma(T)) \leq T \boxtimes \langle m\rangle_{k-1,k}$. Applying~$F$ and using multiplicativity, monotonicity, and $F(\langle m\rangle_{k-1,k}) = 1$ (\cref{lem:free-units}), we obtain $F(\phi(\gamma(T))) \geq F(T)$ and $F(\phi(\gamma(T))) \leq F(T)$, hence $F(\phi(\gamma(T))) = F(T)$.
\end{proof}

\begin{lemma}\label{lem:spec-descend}
Let $F \in \Delta(\FF, k)$ and suppose $F(\langle 2\rangle_{k-1,k}) = 1$. Then $F \circ \phi \in \Delta(\FF, k-1)$ and $F = F \circ \phi \circ \gamma$. In particular, $F$ and $F \circ \phi$ take the same set of values.
\end{lemma}
\begin{proof}
We first check that $F \circ \phi \in \Delta(\FF, k-1)$. Let $S, T$ be $(k-1)$-tensors. For multiplicativity: $\phi(S) \boxtimes \phi(T) \sim \phi(S \boxtimes T)$, so $F(\phi(S \boxtimes T)) = F(\phi(S) \boxtimes \phi(T)) = F(\phi(S))\,F(\phi(T))$. For additivity: $\gamma(\phi(S \oplus T)) \sim \gamma(\phi(S) \oplus \phi(T))$, so by \cref{lem:free-splitting}, $F(\phi(S \oplus T)) = F(\phi(S) \oplus \phi(T)) = F(\phi(S)) + F(\phi(T))$. For normalization: $\gamma(\phi(\langle r\rangle)) \sim \gamma(\langle r\rangle)$, where the first $\langle r\rangle$ is the rank-$r$ unit tensor of order $k-1$ and the second is the rank-$r$ unit tensor of order $k$, so by \cref{lem:free-splitting}, $F(\phi(\langle r\rangle)) = F(\langle r\rangle) = r$. For restriction-monotonicity: $\phi$ preserves restriction, so $S \leq T$ implies $F(\phi(S)) \leq F(\phi(T))$.

For the identity $F = F \circ \phi \circ \gamma$: for any $k$-tensor $T$, $\gamma(\phi(\gamma(T))) \sim \gamma(T)$, so by \cref{lem:free-splitting}, $F(\phi(\gamma(T))) = F(T)$. The set-of-values claim follows.
\end{proof}

\begin{proof}[Proof of \cref{th:spec-well-ord}]
Without loss of generality we may assume the field $\FF$ to be infinite, since for any field extension $\FF \subseteq \FF'$, any element in the asymptotic spectrum $\Delta(\FF,k)$ has a (unique) extension to $\Delta(\FF',k)$ \cite[Theorem 3.10]{strassen1988asymptotic}, so we may work there.  
(We will use this assumption to apply \cref{cor:Qij}.)

The proof is by induction on $k$. In the base case $k=2$, $\Delta(\FF,2)$ contains only matrix rank, for which the claim is true. Suppose $k > 2$. Let $F \in \Delta(\FF, k)$. 

Suppose $F(\langle 2\rangle_{i,j}) > 1$ for every $i\neq j \in [k]$. Then it follows from \cref{cor:Qij} and monotonicity of $F$ under restriction, that for every $k$-tensor $T$, $F(T)$ is at least some power of $\max_i \tensorrank_i(T)$. Thus if $r_1, r_2, \ldots$ is a non-increasing converging sequence in $\{F(T) : T \in \FF^{d_1} \otimes \cdots \otimes \FF^{d_k}, d \in \ZZ_{\geq1}^k\}$, then in fact this sequence is contained in $\{F(T) : T \in \FF^{d_1} \otimes \cdots \otimes \FF^{d_k}\}$ for some $d \in \ZZ_{\geq1}^k$ and thus stabilizes (\cref{cor:asymprank-well-order}).

Suppose that there are $i\neq j \in [k]$ such that $F(\langle 2\rangle_{i,j}) = 1$. By relabeling legs we may take $i = k-1$ and $j = k$. By \cref{lem:spec-descend}, there is a spectral point in $\Delta(\FF, k-1)$ that takes the same set of values as $F$. By the induction hypothesis, this set of values is well-ordered.
\end{proof}

\begin{remark}
There are interesting connections between the above results and recent work on topological Noetherianity of~$\GL_\infty$-varieties \cite{draismaTopologicalNoetherianityPolynomial2019, bdr}. Indeed, our~\cref{cor:general-lowerset-closed} combined with %
results in those works leads to an alternative proof of \cref{th:spec-well-ord}.
\end{remark}

\section{Towards discreteness from below} %

\cref{cor:asymprank-wellord-alltensors} leaves open whether asymptotic rank is discrete from below. In this section, we go into directions towards proving that. In the first part we prove closedness, in the Euclidean topology, of the set of values that asymptotic rank takes, and extend the same statement to every point in Strassen's asymptotic spectrum. In the second part we prove an equivalent geometric characterization of discreteness from below.

\subsection{Euclidean closedness over the complex numbers}
In this section our base field is the complex numbers.
In \cref{sec:regularized-functionals-zariski-semicontinuous} we proved that the set $\mathcal{R} = \{ \asymprank(T) : T \in \CC^{d_1} \otimes \cdots \otimes \CC^{d_k}, d \in \ZZ^k_{\geq1} \}$ is well-ordered (\cref{cor:asymprank-well-order}). In particular, for any decreasing, and thus converging, sequence of elements in $\mathcal{R}$, the limit is in $\mathcal{R}$. %
In this section, we will prove that for \emph{any} converging sequence in~$\mathcal{R}$ the limit is in $\mathcal{R}$, that is:

\begin{theorem}\label{cor:closed}
    $\{ \asymprank(T) : T \in \CC^{d_1} \otimes \cdots \otimes \CC^{d_k}, d \in \ZZ^k_{\geq1} \}$ is 
    Euclidean-closed in~$\RR$.
\end{theorem}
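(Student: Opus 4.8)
The plan is to reduce to a strictly increasing sequence of asymptotic ranks and then realize the limit as the asymptotic rank of a point on an irreducible component of a suitable Zariski closure, using crucially that $\CC$ is uncountable. Let $(\rho_i)_{i\ge 1}$ be a sequence in $\mathcal R$ converging to some $\rho\in\RR$. Passing to a monotone subsequence (which still converges to $\rho$): if it is non-increasing, then well-orderedness of $\mathcal R$ (\autoref{cor:asymprank-wellord-alltensors}) forces it to be eventually constant, hence equal to $\rho$, so $\rho\in\mathcal R$; the same conclusion holds if the monotone subsequence is non-decreasing and eventually constant. So we may assume $\rho_1<\rho_2<\cdots\to\rho$ and fix tensors $T_i$ with $\asymprank(T_i)=\rho_i$. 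Because the flattening ranks lower-bound asymptotic rank, each $T_i$ is equivalent, hence has the same asymptotic rank, to a concise tensor whose format $(d_1^{(i)},\dots,d_k^{(i)})$ satisfies $d_j^{(i)}\le\rho_i\le\rho$ for every $j$; there are only finitely many such formats, so by pigeonhole we may pass to a further subsequence and assume that all $T_i$ lie in one fixed space $V=\CC^{d_1}\otimes\cdots\otimes\CC^{d_k}$.

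Now set $A=\{T_i:i\ge1\}\subseteq V$ and $Z=\overline A$. Since tensor rank is an admissible functional with regularization $\asymprank$, \autoref{thm:regularized admissible functional semicontinuous} gives $\asymprank[Z]=\asymprank[A]=\sup_i\asymprank(T_i)=\sup_i\rho_i=\rho$. Write $Z=Z_1\cup\cdots\cup Z_m$ as a union of irreducible components; a supremum over a finite union is the maximum of the suprema, so some component $Z^\ast$ satisfies $\asymprank[Z^\ast]=\rho$. I claim $Z^\ast$ contains a tensor of asymptotic rank exactly $\rho$. Fix $s_n\uparrow\rho$ with $s_n<\rho$. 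By \autoref{cor:general-lowerset-closed} each $C_n=\{T\in Z^\ast:\asymprank(T)\le s_n\}$ is Zariski-closed in $Z^\ast$, and it is a \emph{proper} subset since $\sup_{T\in Z^\ast}\asymprank(T)=\rho>s_n$. Moreover $\bigcup_n C_n=\{T\in Z^\ast:\asymprank(T)<\rho\}$, because $\asymprank\le\rho$ throughout $Z^\ast$ and $s_n\uparrow\rho$. If this union were all of $Z^\ast$, then the irreducible variety $Z^\ast$ over the uncountable field $\CC$ would be a countable union of proper Zariski-closed subsets, which is impossible. Hence there is $T\in Z^\ast$ with $\asymprank(T)\ge\rho$, and since also $\asymprank(T)\le\asymprank[Z^\ast]=\rho$ we conclude $\asymprank(T)=\rho$, so $\rho\in\mathcal R$.

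The main obstacle is this last step: invoking (and justifying) that an irreducible variety over $\CC$ is not a countable union of proper subvarieties. This is standard but genuinely uses the uncountability of $\CC$ — matching the hypothesis that the base field is $\CC$; over a countable field such as $\overline{\QQ}$ the analogue fails, e.g.\ $\mathbb{A}^1(\overline{\QQ})$ is a countable union of singletons. It can be proved, for instance, by Noether normalization to reduce to affine space $\mathbb{A}^d$, followed by induction on $d$, the base case being that $\mathbb{A}^1(\CC)$ is not a countable union of finite sets. The remaining ingredients — the monotone-subsequence reduction, the pigeonhole over formats via flattening ranks, and the two cited results \autoref{thm:regularized admissible functional semicontinuous} and \autoref{cor:general-lowerset-closed} — are routine.
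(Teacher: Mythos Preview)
Your proof is correct and follows essentially the same strategy as the paper: reduce to a single format via conciseness and a pigeonhole over finitely many formats, dispose of non-increasing sequences by well-orderedness, and for a strictly increasing sequence show that the supremum is actually attained on an irreducible component by invoking the fact that an irreducible complex variety cannot be a countable union of proper closed subvarieties. Two minor organizational differences are worth noting. First, the paper works with the sublevel set $V_{\le r}$ (already known to be closed by \autoref{cor:general-lowerset-closed}) rather than the Zariski closure of $\{T_i\}$; this avoids the appeal to \autoref{thm:regularized admissible functional semicontinuous} and is slightly more streamlined, but your route is fine. Second, for the key uncountability step the paper invokes the Baire category theorem in the Euclidean topology on $X$ (\autoref{lem:baire-property}) rather than Noether normalization; both are standard ways to prove the same statement, and the paper's choice has the small advantage that it directly yields density of the set of maximizers (\autoref{thm:irreducible-variety-rank-maximizers-dense-general}), not just nonemptiness.
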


To prove \cref{cor:closed}, a key lemma is the following ``Baire property'' for affine varieties over~$\CC$. Let~$V$ be a complex vector space.

\begin{lemma}
    \label{lem:baire-property}
    Let $X \subseteq V$ be non-empty and Zariski-closed.
    The intersection of any countable collection of Zariski-open and Zariski-dense subsets of $X$ is Zariski-dense in $X$.
\end{lemma}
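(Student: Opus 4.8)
The plan is to reduce the lemma to the classical fact that, because $\CC$ is uncountable, a positive-dimensional irreducible complex affine variety is never the union of countably many proper Zariski-closed subsets; everything else is topological bookkeeping in the Zariski topology.

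First I would reduce to the case that $X$ is irreducible. Writing $X = X_1 \cup \dots \cup X_m$ for its finitely many irreducible components, I would observe that if $U$ is Zariski-open and Zariski-dense in $X$ then $U \cap X_i \neq \emptyset$ for each $i$ — otherwise $U \subseteq \bigcup_{j \neq i} X_j$, which is closed, forcing $X_i \subseteq \bigcup_{j\neq i} X_j$, contradicting irredundancy — and hence $U \cap X_i$ is open and dense in the irreducible $X_i$. A subset of $X$ whose intersection with every $X_i$ is dense in $X_i$ is dense in $X$, so it suffices to prove the lemma for each $X_i$. Next, for irreducible $X$ I would reduce ``the countable intersection $\bigcap_n U_n$ is dense'' to ``every countable family of dense Zariski-open subsets of $X$ has nonempty intersection'': for a nonempty Zariski-open $W \subseteq X$, each $W \cap U_n$ is again nonempty (two nonempty opens in an irreducible space meet) and hence dense open in $X$, so applying the latter statement to the family $\{W \cap U_n\}_n$ shows $W \cap \bigcap_n U_n \neq \emptyset$. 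Finally, $\bigcap_n U_n = X \setminus \bigcup_n (X \setminus U_n)$ with each $X \setminus U_n$ a proper Zariski-closed subset, so nonemptiness of the intersection is exactly the statement that $X$ is not a countable union of proper Zariski-closed subsets; this is vacuous when $\dim X = 0$, since then $X$ is a single point.

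It then remains to prove that an irreducible complex affine variety $X$ of dimension $d \geq 1$ is not a countable union of proper Zariski-closed subsets. Here I would invoke Noether normalization (valid since $\CC$ is infinite) to get a finite surjective morphism $\pi\colon X \to \mathbb{A}^d$; since finite morphisms are closed and preserve dimension, $\pi$ carries proper closed subsets of $X$ to proper closed subsets of $\mathbb{A}^d$, reducing the claim to $X = \mathbb{A}^d$. For $\mathbb{A}^d$ I would induct on $d$. When $d = 1$, proper closed subsets of $\mathbb{A}^1 = \CC$ are finite and $\CC$ is uncountable, so a countable union of them is not all of $\CC$. For $d \geq 2$, suppose $\mathbb{A}^d = \bigcup_n Y_n$ with $Y_n$ proper closed, and project onto the last coordinate: the set $B_n$ of values $t$ with $\mathbb{A}^{d-1} \times \{t\} \subseteq Y_n$ is cut out in $\mathbb{A}^1$ by the coefficients (polynomials in $t$) of defining equations of $Y_n$, hence Zariski-closed, and it is proper because $Y_n \neq \mathbb{A}^d$; so $B_n$ is finite. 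Choosing $t_0$ outside the countable set $\bigcup_n B_n$, the slice $\mathbb{A}^{d-1} \times \{t_0\}$ is covered by the proper closed subsets $Y_n \cap (\mathbb{A}^{d-1} \times \{t_0\})$, contradicting the inductive hypothesis for $d - 1$.

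The single genuine input is the uncountability of $\CC$ (equivalently, that $\CC$ is not a countable union of finite sets), which is where the hypothesis $\field = \CC$ enters and where the analogy with the Baire category theorem lies; the statement indeed fails over countable fields such as $\barQQ$, where $\mathbb{A}^1$ is already a countable union of points. I expect the non-covering fact to be routine, so the only real care needed is in making the topological reductions of the first step precise — in particular the passage from ``dense'' to ``nonempty'' intersection, which is what lets a single Baire-type statement do all the work.
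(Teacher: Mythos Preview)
Your argument is correct, but it takes a genuinely different route from the paper. The paper's proof is analytic: it passes to the Euclidean topology on $X$, observes that a Zariski-open Zariski-dense subset is also Euclidean-open and Euclidean-dense (citing Serre), notes that $X$ is a complete metric space in the Euclidean topology and hence a Baire space by the classical Baire category theorem, and concludes that $\bigcap_i U_i$ is Euclidean-dense, hence Zariski-dense. Your proof, by contrast, is purely algebraic: you reduce to the statement that an irreducible complex variety is not a countable union of proper closed subvarieties, and prove this via Noether normalization and an induction on $\dim \mathbb{A}^d$ using only the uncountability of $\CC$. The paper's approach is shorter and explains the lemma's name, but leans on the comparison between the Zariski and Euclidean topologies; your approach is longer but avoids analysis entirely and transparently generalizes to any uncountable algebraically closed field (indeed, the paper later remarks that the conclusion fails over $\barQQ$, which your argument makes equally explicit). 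Both reductions in your first paragraph are clean; the only place worth a word of justification is that $\pi(Y) \subsetneq \mathbb{A}^d$ for proper closed $Y \subsetneq X$, which follows because $\dim Y < \dim X = d$ (by irreducibility of $X$) and finite morphisms preserve dimension of closed subsets.
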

\begin{proof}
    For $i \in \NN$, let $U_i \subseteq X$ be Zariski-open and Zariski-dense.
    Then $U_i$ is Euclidean-dense in $X$~\cite[Prop.~5]{serreGeometrieAlgebriqueGeometrie1956}.
    Also the $U_i$ are Euclidean-open, because of the general fact that Zariski-open implies Euclidean-open.
    With the Euclidean topology, $X$ is a complete metric space and thus a Baire space (by the Baire category theorem~\cite[Thm.~48.1]{munkres2000topology}), which implies that the intersection $\cap_i U_i$ is Euclidean-dense in~$X$. Now we use the general fact that Euclidean-dense implies Zariski-dense.
\end{proof}

We briefly work in a more general setting.
Let~$V$ be a complex vector space. A function~$\func\colon V \to \RR$ is called Zariski-lower-semicontinuous if for every~$r \in \RR$, the set~$\{T \in V : \func(T) \leq r \}$ is Zariski-closed.

For $A \subseteq V$ we write $\func[A] = \sup_{T \in A} \func(T)$.
\begin{theorem}
    \label{thm:irreducible-variety-rank-maximizers-dense-general}
    Let~$X \subseteq V$ be non-empty, Zariski-closed and irreducible, and let~$F\colon V \to \RR$ be Zariski-lower-semicontinuous.
    Then the set $\{ T \in X : \func(T) = \func[X] \}$ is Zariski-dense in $X$. In particular, it is non-empty, and~$\func[X] < \infty$.
\end{theorem}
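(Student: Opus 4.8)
The plan is to apply the Baire property of affine varieties over $\CC$ (\autoref{lem:baire-property}) to the descending family of Zariski-closed sublevel sets of $\func$ on $X$, using irreducibility of $X$ to turn "some open set fails to be dense" into "the complementary closed set is everything."

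First I would show $\func[X] < \infty$. For $n \in \ZZ_{\geq1}$, set $U_n = \{T \in X : \func(T) > n\}$. Since $\{T \in V : \func(T) \le n\}$ is Zariski-closed and $X$ is Zariski-closed, each $U_n$ is Zariski-open in $X$, and $U_1 \supseteq U_2 \supseteq \cdots$. Because $\func$ is real-valued, $\bigcap_n U_n = \emptyset$; as $X \neq \emptyset$, \autoref{lem:baire-property} forbids all $U_n$ from being Zariski-dense in $X$. Choose $N$ with $\overline{U_N} \neq X$. Then $X = \bigl(\{T \in X : \func(T) \le N\}\bigr) \cup \overline{U_N}$ is a union of two Zariski-closed subsets of $X$; since $X$ is irreducible and $\overline{U_N} \neq X$, we get $X = \{T \in X : \func(T) \le N\}$, i.e.\ $\func[X] \le N < \infty$.

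Now put $s = \func[X] < \infty$, and for $n \in \ZZ_{\geq1}$ set $U_n' = \{T \in X : \func(T) > s - 1/n\}$, again Zariski-open in $X$. I claim each $U_n'$ is Zariski-dense in $X$: otherwise $\overline{U_n'} \neq X$, and then $X = \bigl(\{T \in X : \func(T) \le s - 1/n\}\bigr) \cup \overline{U_n'}$ would force, by irreducibility, $X = \{T \in X : \func(T) \le s - 1/n\}$, contradicting $s = \sup_{T \in X} \func(T)$. By \autoref{lem:baire-property}, $\bigcap_n U_n'$ is Zariski-dense in $X$. But, using $\func(T) \le s$ for all $T \in X$,
\[
\bigcap_{n \ge 1} U_n' = \{T \in X : \func(T) \ge s\} = \{T \in X : \func(T) = s\},
\]
so $\{T \in X : \func(T) = \func[X]\}$ is Zariski-dense in $X$, hence non-empty.

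The real content sits in \autoref{lem:baire-property} (itself resting on the Baire category theorem for $X$ in the Euclidean topology and on the comparison of Zariski- and Euclidean-density over $\CC$); the rest is a short manipulation. The one subtlety to respect is that $\func[X]$ could a priori be $+\infty$, which is precisely why the first paragraph — collapsing the chain $U_1 \supseteq U_2 \supseteq \cdots$ with empty intersection — must come before one can speak of $1/n$-neighborhoods of the maximum.
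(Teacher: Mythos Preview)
Your proof is correct and follows essentially the same approach as the paper: both use irreducibility of $X$ to see that each superlevel set $X_{>r}$ (for $r<\func[X]$) is a nonempty Zariski-open, hence Zariski-dense, subset of $X$, and then invoke \autoref{lem:baire-property} on a countable family of such sets. The only difference is presentational: the paper treats finiteness of $\func[X]$ and density of the maximizer set in a single stroke with one sequence $r_1<r_2<\cdots\to\func[X]$, whereas you split this into two explicit passes (first with $U_n=X_{>n}$ to force $\func[X]<\infty$, then with $U_n'=X_{>s-1/n}$), which makes the logic slightly more transparent.
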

\begin{proof}
    Let $X_{\leq r} = \{ T \in X : \func(T) \leq r \}$ and $X_{> r} = \{ T \in X : \func(T) > r \}$.
    Then $X_{\leq r}$ is Zariski-closed in $X$ (by assumption of $\{ T \in V : \func(T) \leq r \}$ and $X$ being Zariski-closed), and $X_{>r}$ is Zariski-open.
    Since $X$ is irreducible, whenever $X_{\leq r}$ is a proper subset of $X$, its complement $X_{> r}$ is a non-empty Zariski-open subset and thus Zariski-dense (as every non-empty Zariski-open is Zariski-dense in an irreducible variety).
    Let $r_1 < r_2 < \cdots < F[X]$ be a sequence of real numbers such that~$\sup_{i \geq 1} r_i = F[X]$.
    Since $\{ T \in X : \func(T) = \func[X] \}$ equals the countable intersection $\bigcap_{i \geq 1} X_{>r_i}$, and since every $X_{>r_i}$ is nonempty (because $r_i < F[X]$), the claim follows by \cref{lem:baire-property}.
\end{proof}
\begin{corollary}\label{cor:exist-max-general}
    Let $X \subseteq V$ be non-empty and Zariski-closed, and let~$\func\colon V \to \RR$ be Zariski-lower-semicontinuous.
    Then there exists $T \in X$ such that $\func(T) = \func[X]$.
\end{corollary}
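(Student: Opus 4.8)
The plan is to reduce the general case to the irreducible case already handled in \autoref{thm:irreducible-variety-rank-maximizers-dense-general}, using that the Zariski topology on $V$ is Noetherian. First I would invoke Noetherianity to write $X$ as a finite union $X = X_1 \cup \dotsb \cup X_m$ of its irreducible components, each $X_i$ being non-empty and Zariski-closed. This finiteness is the key structural input: without it, a supremum over infinitely many components need not be attained.

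Next, for each $i \in [m]$, I would apply \autoref{thm:irreducible-variety-rank-maximizers-dense-general} to the non-empty irreducible Zariski-closed set $X_i$ to obtain a point $T_i \in X_i$ with $\func(T_i) = \func[X_i]$, and in particular $\func[X_i] < \infty$. Then, since $X$ is the union of the $X_i$ and there are only finitely many of them,
\[
\func[X] = \sup_{T \in X} \func(T) = \max_{i \in [m]} \sup_{T \in X_i} \func(T) = \max_{i \in [m]} \func[X_i] < \infty.
\]

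Finally I would choose an index $i^\ast \in [m]$ attaining this maximum, so that $\func(T_{i^\ast}) = \func[X_{i^\ast}] = \func[X]$, and since $T_{i^\ast} \in X_{i^\ast} \subseteq X$ this is the desired point. There is no real obstacle here beyond bookkeeping; all the analytic content (the Baire-category argument over $\CC$ that forces the supremum to be attained on an irreducible variety) has already been spent in the proof of \autoref{thm:irreducible-variety-rank-maximizers-dense-general}, and the only thing one must be careful about is that $\func$ is only assumed Zariski-lower-semicontinuous, which is exactly what makes each $X_i \cap \{\func \le r\}$ closed in $X_i$ so that the earlier theorem applies verbatim to each component.
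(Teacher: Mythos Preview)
Your proposal is correct and follows essentially the same approach as the paper: decompose $X$ into finitely many irreducible components, apply \autoref{thm:irreducible-variety-rank-maximizers-dense-general} to each component to find a maximizer $T_i$, and then pick the component on which $\func[X_i]$ is maximal.
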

\begin{proof}
    Let $X = X_1 \cup \dotsb \cup X_\ell$ be the decomposition of $X$ into irreducible components.
    Then $\func[X] = \max_{j \in [\ell]} \func[X_j]$, and \cref{thm:irreducible-variety-rank-maximizers-dense-general} yields for every irreducible component a tensor $T_j \in X_j$ such that $\func(T_j) = \func[X_j]$.
    Hence $\func[X] = \func(T_j)$ for some $j \in [\ell]$.
\end{proof}
\begin{corollary}\label{cor:eucl-cl-general}
    Let~$\func\colon V \to \RR$ be Zariski-lower-semicontinuous.
    Then~$\{\func(T) : T \in V \}$ is closed (in the Euclidean topology on $\RR$).
\end{corollary}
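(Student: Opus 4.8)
The plan is to show that every point $r \in \RR$ lying in the Euclidean closure of $\{F(T) : T \in V\}$ already belongs to that set. Fix tensors $T_n \in V$ with $r_n := F(T_n) \to r$; if $r_n = r$ for some $n$ we are done, so assume $r_n \neq r$ for every $n$. The key idea is to package this (not necessarily monotone) convergent sequence into a \emph{descending} chain of Zariski-closed sets and then combine Noetherianity with the existence of maximizers supplied by \autoref{cor:exist-max-general}.

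Concretely, I would set $s_n = \sup_{m \geq n} r_m$, which is finite for every $n$ since the tail of a convergent sequence is bounded; then $s_1 \geq s_2 \geq \cdots$ and $\inf_n s_n = \limsup_m r_m = r$. By the standing hypothesis on $F$, each $X_n := \{T \in V : F(T) \leq s_n\}$ is Zariski-closed, and it is non-empty because $F(T_n) = r_n \leq s_n$, so $T_n \in X_n$. Moreover $X_1 \supseteq X_2 \supseteq \cdots$. Since the Zariski topology on $V$ is Noetherian \cite{MR1322960}, this chain stabilizes: there is $N$ with $X_n = X_N$ for all $n \geq N$. Then $X_N = \bigcap_{n \geq N} X_n = \{T : F(T) \leq s_n \text{ for all } n \geq N\} = \{T : F(T) \leq r\}$, so the sublevel set $\{T \in V : F(T) \leq r\}$ is non-empty and Zariski-closed.

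Now apply \autoref{cor:exist-max-general} to $X := \{T : F(T) \leq r\}$: there exists $T^* \in X$ with $F(T^*) = F[X] = \sup\{F(T) : F(T) \leq r\} \leq r$. It remains to check $F[X] \geq r$. For every $n \geq N$ we have $T_n \in X_n = X_N = X$, hence $r_n = F(T_n) \leq r$; since $r_n \to r$ this forces $\sup_{n \geq N} r_n = r$, and therefore $F[X] \geq \sup_{n \geq N} F(T_n) = r$. Combining the two inequalities gives $F(T^*) = F[X] = r$, so $r$ lies in the image of $F$, as required.

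The main obstacle is the packaging step: one has to notice that the suprema $s_n$ convert an arbitrary convergent sequence of values into a genuine descending chain of Zariski-closed sets, and that Noetherian stabilization then pins the intersection down to the sublevel set $\{F \leq r\}$ itself, which a priori is not even known to be non-empty. The same observation that $T_n \in X_N$ for large $n$ simultaneously excludes the potential pathology of values approaching $r$ strictly from above (which would otherwise force some $F(T_n) \leq r < r_n$, absurd) and delivers the matching lower bound $F[X] \geq r$; the rest is bookkeeping. Note the argument uses only Zariski-lower-semicontinuity of $F$ and Noetherianity of $V$, so it applies verbatim to $\asymprank$ on each fixed space $\CC^{d_1} \otimes \cdots \otimes \CC^{d_k}$, which together with the flattening-rank lower bound (as in \autoref{cor:asymprank-wellord-alltensors}) yields \autoref{cor:closed}.
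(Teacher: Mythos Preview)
Your proof is correct and rests on the same two ingredients as the paper's: Noetherianity of the Zariski topology on $V$ and \autoref{cor:exist-max-general} applied to the sublevel set $\{T : F(T) \leq r\}$. The organizational difference is that the paper first passes to a monotone subsequence: the decreasing case is dispatched by invoking the already-proved well-orderedness (cf.~\autoref{cor:asymprank-well-order}), and in the remaining increasing case one has $r_i < r$ for all $i$, so the witnesses $T_i$ visibly lie in $V_{\leq r}$ and $F[V_{\leq r}] = r$ is immediate. Your $\limsup$ packaging avoids the case split by running the Noetherian descending-chain argument directly on the sets $X_n = \{F \leq s_n\}$, which in effect re-derives inline the needed consequence of well-orderedness (that eventually $r_n \leq r$) and simultaneously identifies $X_N$ with $\{F \leq r\}$. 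This is slightly more self-contained, at the cost of a little extra bookkeeping; the paper's version is shorter because it reuses well-orderedness as a black box.
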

\begin{proof}
    This set is well-ordered (as a consequence of the Zariski-closedness of sublevel sets of $\func$, combined with topological Noetherianity of the Zariski-topology, cf.~the proof of \cref{cor:asymprank-well-order}), so every decreasing converging sequence in $\mathcal{\func} = \{\func(T) : T \in V \}$ has a limit in $\mathcal{\func}$. It remains to prove the same for every increasing sequence.
    Let $r_1 < r_2 < \cdots$ be a sequence in $\mathcal{\func}$ converging to~$r \in \RR$.
    Let $V_{\leq r} = \{ T \in V : \func(T) \leq r \}$. Then $V_{\leq r}$ is Zariski-closed.
    We have $\func[V_{\leq r}] \leq r$ by definition, and $\func[V_{\leq r}] \geq r_i$ for every $i \geq 1$, so $\func[V_{\leq r}] = r$.
    Therefore there exists $T \in V_{\leq r}$ such that $\func(T) = \func[V_{\leq r}] = r$ (\cref{cor:exist-max-general}). We conclude that $r \in \mathcal{\func}$.
\end{proof}

We now return to the specific case of asymptotic tensor rank, noting that it has Zariski-closed sublevel sets (for fixed formats) by~\cref{th:intro-2}.
\begin{proof}[Proof of \cref{cor:closed}]
    Let $\mathcal{R} = \{ \asymprank(T) : T \in \CC^{d_1} \otimes \cdots \otimes \CC^{d_k}, d \in \ZZ^k_{\geq1} \}$.
    Let $r_1, r_2, \ldots$ be a converging sequence in $\mathcal{R}$ with limit $r \in \RR$.
    Then the sequence is bounded, say $r_i \leq m$ for all~$i$.
    Let $T_i$ be tensors such that $\asymprank(T_i) = r_i$.
    We may assume that the $T_i$ are concise. Then for every $i$, for some $d_1, \ldots, d_k\leq m$ we have $T_i \in \CC^{d_1} \otimes \cdots \otimes \CC^{d_k}$. Thus our sequence $r_1, r_2, \ldots$ is in $\cup_{d_i \leq m} \{ \asymprank(T) : T \in \CC^{d_1} \otimes \cdots \otimes \CC^{d_k}\}$ which is a closed set. We conclude that $r \in \cup_{d_i \leq m} \{ \asymprank(T) : T \in \CC^{d_1} \otimes \cdots \otimes \CC^{d_k}\} \subseteq \mathcal{R}$. %
\end{proof}

The same argument extends from asymptotic tensor rank to every point in Strassen's asymptotic spectrum, by combining \cref{cor:eucl-cl-general} with the growth ingredient used in the proof of \cref{th:spec-well-ord}.
\begin{corollary}\label{cor:spec-eucl-closed}
Let $k \in \ZZ_{\geq 2}$ and $F \in \Delta(\CC, k)$. Then
$\{F(T) : T \in \CC^{d_1} \otimes \cdots \otimes \CC^{d_k},\, d \in \ZZ_{\geq1}^k\}$ is Euclidean-closed in~$\RR$.
\end{corollary}
\begin{proof}
The argument mirrors the proof of \cref{th:spec-well-ord}, by induction on~$k$. The base case $k=2$ is matrix rank, for which the claim is trivial. Suppose $k > 2$ and let $r_1, r_2, \ldots$ be a converging sequence in the set, with limit~$r$ and bound $r_i \leq m$ for all~$i$.

Suppose first that $F(\langle 2\rangle_{i,j}) > 1$ for every $i\neq j \in [k]$. Then, as in the proof of \cref{th:spec-well-ord}, $F(T)$ is at least some power of $\max_i \tensorrank_i(T)$ for every $k$-tensor $T$. Picking concise representatives $T_i$ with $F(T_i) = r_i$, the bound $r_i \leq m$ forces all format dimensions to be at most some $M \in \NN$. Hence the sequence lies in the finite union $\bigcup_{d_i \leq M} \{F(T) : T \in \CC^{d_1} \otimes \cdots \otimes \CC^{d_k}\}$. Each set in this union is Euclidean-closed by \cref{cor:eucl-cl-general} (which applies since $F$ has Zariski-closed sublevel sets, by \cref{cor:general-lowerset-closed}), so the union is Euclidean-closed and the limit $r$ lies in it.

Suppose instead that $F(\langle 2\rangle_{i,j}) = 1$ for some $i\neq j \in [k]$. By relabeling legs we may take $i = k-1$ and $j = k$. By \cref{lem:spec-descend}, $F$ takes the same set of values as some $F' \in \Delta(\CC, k-1)$, and the claim follows by the induction hypothesis.
\end{proof}

\subsection{Geometric characterization of discreteness from below}

In this section our base field is the complex numbers. For a complex vector space $V$ we consider any function $F\colon V \to \RR$ whose sublevel sets are Zariski-closed (e.g., asymptotic tensor rank, \cref{cor:general-lowerset-closed}). Letting $\mathcal{\func}$ be the set $\{ \func(T) : T \in V \}$, it follows from topological Noetherianity of $V$ that $\mathcal{\func}$ is well-ordered (cf.~the proof of \cref{cor:asymprank-well-order}). In this section we give a geometric characterization for such a set $\mathcal{F}$ to be also discrete from below (that is, for every nondecreasing sequence in it to stabilize).

\begin{theorem}\label{th:geom-charac}
    Let $V$ be a complex vector space, and let $\func\colon V \to \RR$ be Zariski-lower-semicontinuous.
    We write~$V_{\leq r} = \{ T \in V : \func(T) \leq r \}$ and $V_{=r} = \{ T \in V : \func(T) = r \}$ for~$r \in \RR$.
    The following are equivalent.
    \begin{enumerate}[\upshape(a)]
        \item $\{ \func(T) : T \in V \}$ is discrete from below. %
        \item For every $r \in \RR$, $V_{=r}$ is Zariski-open in $V_{\leq r}$.
    \end{enumerate}
\end{theorem}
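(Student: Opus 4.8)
The plan is to prove the two implications separately, using the Noetherianity of the Zariski topology on $V$ together with the existence-of-maximizers result (\autoref{cor:exist-max-general}) that was established for Zariski-lower-semicontinuous functions over $\CC$.

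\medskip

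\textbf{(b) $\Rightarrow$ (a).} Suppose $V_{=r}$ is Zariski-open in $V_{\leq r}$ for every $r$, and let $r_1 < r_2 < \cdots$ be a strictly increasing sequence in $\{F(T):T\in V\}$; I must derive a contradiction. First I would pass to the supremum: either the sequence is unbounded, or it converges to some $r \in \RR$. In the bounded case, consider the Zariski-closed set $X = V_{\leq r}$. For each $i$, the set $V_{\leq r_i} = \{T \in X : F(T) \leq r_i\}$ is Zariski-closed in $X$, and the chain $V_{\leq r_1} \subseteq V_{\leq r_2} \subseteq \cdots$ is strictly increasing (since $r_i \in \{F(T):T\in V\}$ witnesses a point of $V_{\leq r_i}\setminus V_{\leq r_{i-1}}$, using that $V_{=r_i}\subseteq V_{\leq r}=X$ is nonempty). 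This contradicts topological Noetherianity of $V$. The unbounded case is handled the same way after restricting to $V_{\leq r_i}$ for any fixed large index, or simply by noting the $V_{\leq r_i}$ form a strictly ascending chain of Zariski-closed subsets of $V$ itself. Note this direction does not even use hypothesis (b); it is just well-orderedness restated. Hence the real content of (b) $\Rightarrow$ (a) — and what makes (b) the right condition — must actually be extracted more carefully: I would instead argue the contrapositive of (a) $\Rightarrow$ (b) is what needs (b), so let me reorganize.

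\medskip

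\textbf{(a) $\Rightarrow$ (b).} Assume $\{F(T):T\in V\}$ is discrete from below, fix $r \in \RR$, and let $X = V_{\leq r}$. I want to show $V_{=r}$ is Zariski-open in $X$, equivalently that $X_{<r} := \{T \in X : F(T) < r\}$ is Zariski-closed in $X$. Consider $r' = \sup\{F(T) : T \in X,\ F(T) < r\}$ (set $r'=-\infty$ if $X_{<r}$ is empty, in which case $X_{<r}$ is trivially closed). Discreteness from below applied to an increasing sequence in $\{F(T):T\in X_{<r}\}$ approaching $r'$ shows this supremum is attained, so $r' < r$ is actually the largest value of $F$ on $X$ strictly below $r$ — there is a \emph{gap} between $r'$ and $r$. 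Therefore $X_{<r} = \{T \in X : F(T) \leq r'\} = V_{\leq r'} \cap X$, which is Zariski-closed. This proves (b).

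\medskip

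\textbf{(b) $\Rightarrow$ (a), correctly.} Now assume (b) and suppose for contradiction that $t_1 < t_2 < \cdots$ is a strictly increasing sequence in $\{F(T):T\in V\}$ converging to some finite $r$ (the unbounded case being easy as above). For each $i$ pick $T_i$ with $F(T_i) = t_i$. Let $X = V_{\leq r}$, a non-empty Zariski-closed set containing all $T_i$. By \autoref{cor:exist-max-general} (applicable since $F$ has Zariski-closed sublevel sets and the field is $\CC$), there is $T^\star \in X$ with $F(T^\star) = F[X] =: s \leq r$, and since $F(T^\star) \geq t_i$ for all $i$ we get $s \geq r$, hence $s = r$. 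Thus $V_{=r} \cap X = V_{=r} \neq \emptyset$. By (b), $V_{=r}$ is Zariski-open in $X$, hence its complement $X_{<r}$ in $X$ is Zariski-closed; but $T_1, T_2, \ldots \in X_{<r}$, and $F|_{X_{<r}}$ takes values $t_1 < t_2 < \cdots$, so the ascending chain $\{T \in X_{<r} : F(T) \leq t_i\}$ is a strictly increasing chain of Zariski-closed subsets of $X_{<r}$, contradicting Noetherianity. This completes the proof.

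\medskip

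The main obstacle I anticipate is the correct handling of the ``gap'' argument in (a) $\Rightarrow$ (b): one must be careful that $\sup\{F(T) : T\in X,\ F(T)<r\}$ is genuinely attained and strictly below $r$ — this is exactly where discreteness from below is used, and getting the quantifiers right (distinguishing ``attained'' from ``not equal to $r$'') is the delicate point. The reverse direction's only subtlety is invoking \autoref{cor:exist-max-general} to produce a point of $V_{=r}$, ensuring $V_{=r}$ is non-empty so that openness in $X$ has bite, and then converting openness back into a Noetherianity contradiction.
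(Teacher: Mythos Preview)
Your argument for (a) $\Rightarrow$ (b) is fine and matches the paper's: discreteness from below produces a gap $r' < r$ so that $V_{=r} = V_{>r'} \cap V_{\leq r}$ is open in $V_{\leq r}$.

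The direction (b) $\Rightarrow$ (a), however, has a genuine error. At the end you produce a strictly \emph{ascending} chain of Zariski-closed subsets
\[
\{T \in X_{<r} : \func(T) \leq t_i\}
\]
of $X_{<r}$ and claim this contradicts Noetherianity. It does not: Noetherianity gives the descending chain condition on closed sets (equivalently, the ascending chain condition on open sets). Infinite strictly ascending chains of closed sets exist already in $\CC^1$, e.g.\ $\{1\} \subsetneq \{1,2\} \subsetneq \cdots$. The same confusion appears in your first aborted attempt and in your closing paragraph (``converting openness back into a Noetherianity contradiction''); the conversion simply does not go through. This is also why the paper explicitly remarks that the statement is false over countable algebraically closed fields.

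What is actually needed is the Baire-type fact (\autoref{lem:baire-property}) that over $\CC$ a variety cannot be a countable union of proper closed subvarieties. The paper uses this directly: once $X_{<r}$ is closed, $X_{<r} = \bigcup_i V_{\leq t_i}$ expresses it as such a union, and \autoref{lem:baire-property} applied to an irreducible component yields the contradiction. You were in fact one line away from an even cleaner fix that stays within your own framework: having shown $X_{<r}$ is closed (using (b)) and nonempty, apply \autoref{cor:exist-max-general} to $X_{<r}$ itself rather than to $V_{\leq r}$. This gives $T^{\star\star} \in X_{<r}$ with $\func(T^{\star\star}) = \func[X_{<r}] \geq \sup_i t_i = r$, contradicting $T^{\star\star} \in X_{<r}$. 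Your invocation of \autoref{cor:exist-max-general} for $V_{\leq r}$ to show $V_{=r} \neq \emptyset$ is then unnecessary.
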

Note in \cref{th:geom-charac} that $V_{=r}$ may be empty (namely when $F$ does not attain $r$), in which case $V_{=r}$ is trivially Zariski-open.
\begin{proof}
    (a) $\Rightarrow$ (b). 
    Let $r \in \RR$.
    Since $\mathcal{\func}$ is discrete from below, there exists an $r'\in \RR$ such that $r' < r$ and for all $s < r$ with $s \in \mathcal{\func}$ we have
    $s < r'$ .
    Then
    \begin{equation}
        V_{=r} = V_{>r'} \cap V_{\leq r}
    \end{equation}
    where $V_{>r'} = \{ T \in V : \func(T) > r' \}$ is Zariski-open in $V$ by the assumption on~$F$.
    Therefore $V_{=r}$ is open in~$V_{\leq r}$.

    (b) $\Rightarrow$ (a). 
    It suffices to show that if $r_1 \leq r_2 \leq \dotsc$ is a sequence in $\mathcal{\func}$, then there exists $i_0 \geq 1$ such that $r_i = r_{i+1}$ for all $i \geq i_0$.
    We observe that any such sequence is bounded above by~\cref{thm:irreducible-variety-rank-maximizers-dense-general}, hence $r := \lim_{i\to\infty} r_i$ exists, and~$r \in \mathcal{\func}$, because $\mathcal{\func}$ is closed (\cref{cor:eucl-cl-general}).
    We now argue by contraposition.
    If $V_{=r}$ is open, $V_{<r}$ is closed.
    If $r_1 < r_2 < \cdots$ is a strictly increasing sequence of values of~$F$ converging to $r$, then
    \begin{equation*}
        V_{<r} = \cup_{r_i < r} V_{\leq r_i}
    \end{equation*}
    expresses an affine variety as a countably infinite union of proper closed subvarieties.
    This is impossible over $\CC$, as we will now prove. (Note, however, that for instance over the algebraic closure of the rationals, which is countable, this is possible.) 
    Indeed if $V_{<r} = X_1 \cup \dotsc \cup X_\ell$ is its decomposition into irreducible components, then there is some component $X_j$ such that for every $r_i < r$, $V_{\leq r_i} \cap X_j$ is a proper closed subset of $X_j$, for otherwise every component would be of the form~$V_{\leq r_{i_j}}$ for some~$i_j$, and one would have~$V_{<r} = \cup_{j=1}^\ell V_{\leq r_{i_j}}$, contradicting the assumption that we have an infinite sequence of strictly increasing~$V_{\leq r_i}$'s. Therefore $V_{\leq r_i} \cap X_j$ has non-empty Zariski-open complement (as $X_j$ is irreducible), which is also dense.
    Therefore $\cup_{r_i<r} V_{\leq r_i} \cap X_j$  has dense complement in $X_j$ by \autoref{lem:baire-property}, and is in particular non-empty.
    As a result we obtain that $\cup_{r_i<r} V_{\leq r_i}$ cannot contain all of $X_j$.
\end{proof}

\begin{corollary}\label{cor:asymprank-disc-equiv}
    Let $V = \CC^{d_1} \otimes \cdots \otimes \CC^{d_k}$, and let $V_{\leq r}$ be the set of all tensors in $V$ with asymptotic rank at most $r$. %
    Let $V_{=r}$ be the set of all tensors in $V$ with asymptotic rank equal to $r$.
    The following are equivalent.
    \begin{enumerate}[\upshape(a)]
        \item $\{ \asymprank(T) : T \in V \}$ is discrete.
        \item For every $r \in \RR$, $V_{=r}$ is Zariski-open in $V_{\leq r}$.
    \end{enumerate}
\end{corollary}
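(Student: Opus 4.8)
The statement to prove is \autoref{cor:asymprank-disc-equiv}, which specializes \autoref{th:geom-charac} to the case $V = \CC^{d_1} \otimes \cdots \otimes \CC^{d_k}$ with $F = \asymprank$.

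The plan is to simply invoke \autoref{th:geom-charac} with this particular choice of $F$. The only thing that needs checking is the hypothesis of \autoref{th:geom-charac}, namely that $V_{\leq r} = \{T \in V : \asymprank(T) \leq r\}$ is Zariski-closed for all $r \in \RR$. But this is exactly \autoref{cor:general-lowerset-closed} (equivalently \autoref{th:intro-2}), since tensor rank is an admissible functional in the sense of \autoref{def:admissible functional} and asymptotic tensor rank is its regularization. So $\asymprank$ satisfies the hypothesis of \autoref{th:geom-charac}, and the equivalence of (a) and (b) for $\asymprank$ on $V$ follows immediately.

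There is essentially no obstacle here: the corollary is a direct instantiation. The only minor subtlety worth a sentence is that the notion of "discrete" for a subset of $\RR$ means both well-ordered (discrete from above) and discrete from below; since well-orderedness of $\{\asymprank(T) : T \in V\}$ already holds by \autoref{cor:asymprank-well-order} (or by topological Noetherianity as in its proof), being discrete is equivalent to being discrete from below, which is precisely statement (a) of \autoref{th:geom-charac}. Thus the two corollaries' statements (a) literally match, and (b) is verbatim the same, so the proof is one line.

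\begin{proof}
Tensor rank is an admissible functional (\autoref{def:admissible functional}), and $\asymprank$ is its regularization, so by \autoref{cor:general-lowerset-closed} the set $V_{\leq r} = \{T \in V : \asymprank(T) \leq r\}$ is Zariski-closed for every $r \in \RR$. Hence $\asymprank$ satisfies the hypothesis of \autoref{th:geom-charac}. Moreover, $\{\asymprank(T) : T \in V\}$ is well-ordered by \autoref{cor:asymprank-well-order}, so it is discrete if and only if it is discrete from below. The claim now follows by applying \autoref{th:geom-charac} with $F = \asymprank$.
\end{proof}
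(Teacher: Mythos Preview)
Your proposal is correct and matches the paper's approach: the paper states \autoref{cor:asymprank-disc-equiv} without proof, treating it as an immediate instantiation of \autoref{th:geom-charac}, and you have correctly verified the hypothesis via \autoref{cor:general-lowerset-closed} and bridged the wording gap between ``discrete'' and ``discrete from below'' using \autoref{cor:asymprank-well-order}.
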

\begin{proof}
This follows from~\cref{th:geom-charac} and~\cref{cor:general-lowerset-closed}.
\end{proof}

\section{Discussion and open problems}
We discuss several natural directions and open problems in the context of our results, Strassen's asymptotic rank conjecture and the matrix multiplication exponent.

\begin{itemize}
\item \textbf{Discreteness from below.} We proved discreteness from above for the asymptotic rank (and a large class of other parameters) over (for instance) the complex numbers. One of the main problems that we leave open is whether this parameter is also discrete from below. Indeed, Strassen's asymptotic rank conjecture would imply this. 

As an intermediate result, we have shown that any converging sequence of asymptotic ranks has a limit which is also an asymptotic rank (\cref{cor:closed}). What remains to be shown is that any such sequence is eventually constant; we proved an equivalent characterization in \cref{cor:asymprank-disc-equiv}. 

Notions of discreteness for asymptotic parameters have been studied more broadly, in particular in the context of the Shannon capacity of graphs \cite{MR0089131} and the related asymptotic spectrum of graphs \cite{MR4039606, MR4357434}; interestingly, in that setting, the asymptotic parameter of interest (Shannon capacity) is not discrete (neither from above or from below), which leads to a graph limit approach to determining Shannon capacity~\cite{deboer2024asymptoticspectrumdistancegraph}.

\item \textbf{Geometric properties; irreducibility.}
Given (\cref{th:intro-2}) that the sublevel sets $\{T \in \FF^{d_1} \otimes \cdots \otimes \FF^{d_k} : \asymprank(T) \leq r\}$ of the asymptotic tensor rank are Zariski-closed, it is natural to ask about the geometric properties of these sets. For instance, we may ask if they are irreducible (i.e., cannot be written as the union of two proper Zariski-closed subsets) whenever $\FF$ is algebraically closed. 

Indeed, irreducibility is true for $k=2$ (matrices), since then the asymptotic rank coincides with matrix rank, and the set of matrices of at most a given rank is an irreducible variety. 
For $k \geq 3$, irreducibility is open. It would imply both that the set of achievable values of the asymptotic rank is discrete, and a weak form of Strassen's asymptotic rank conjecture:
there exist at most $d_1 \dotsb d_k + 1$ asymptotic ranks in format $d_1 \times \dotsb \times d_k$.
This follows from a dimension argument (topological dimension is the maximal length of a decreasing chain of irreducible subvarieties, and the dimension of $\FF^{d_1 \times \dotsb \times d_k}$ is $d_1 \dotsb d_k$).

To see that this is a weak form of Strassen's conjecture, note that Strassen's conjecture is equivalent to the statement that there are exactly $d + 1$ distinct asymptotic ranks in~$(\FF^d)^{\otimes k}$ (namely~$\{0, 1, \dotsc, d\}$).
Note that Strassen's conjecture also implies that the sets $\{T \in \FF^{d_1} \otimes \cdots \otimes \FF^{d_k} : \asymprank(T) \leq r\}$ are irreducible: $r$ is integer, and $\FF^{d_1 \times r} \times \dotsb \times \FF^{d_k \times r} \times \FF^{r \times \dotsb \times r}$ is irreducible and admits a surjective polynomial map onto the set of tensors with all flattening ranks at most $r$.

\item \textbf{Computation and structure of asymptotic ranks.} Our results provide new tools to concretely understand the asymptotic rank of families of tensors and their relation to the matrix multiplication exponent $\omega$, in the spirit of \cite{kaski2024universalsequencetensorsasymptotic}. For instance (by \cref{th:intro-2}), truth of the asymptotic rank conjecture on any subset of tensors by our result extends to the Zariski-closure. 

Another natural direction where our result plays a role is in the task of understanding the relation between asymptotic ranks of explicit (families of) tensors and the matrix multiplication exponent. For example, it is well-known that if the asymptotic rank of the small Coppersmith--Winograd tensor $\mathrm{cw}_2$ equals 3, then $\omega=2$. For which tensors can we prove this property? 

It follows from \cref{th:intro-2} and known classifications that $\mathrm{cw}_2$ has asymptotic rank at most the generic asymptotic rank of $3 \times 3 \times 3$ tensors with hyperdeterminant $0$ (which is a codimension $1$ variety). It also has at most the asymptotic rank of a generic tensor with support
    $\{(0,0,0), (1,1,1), (2,2,2), (0,1,2), (1,2,0), (2,0,1)\}$.
\end{itemize}

\paragraph{Acknowledgements.}
JZ was supported by NWO Veni grant VI.Veni.212.284, Vidi grant VI.Vidi.243.195 and ERC Starting Grant 101220349 SPECTRA. KH was supported by NWO M-1 grant OCENW.M.24.147.
MC and HN thank the European Research Council (ERC Grant Agreement No.~818761), VILLUM FONDEN via the QMATH Centre of Excellence (Grant No.~10059) and the Novo Nordisk Foundation (Grant NNF20OC0059939 ``Quantum for Life'') for financial support. Part of this work was completed while MC was Turing Chair for Quantum Software, associated to the QuSoft research center in Amsterdam, acknowledging financial support by the Dutch National Growth Fund (NGF), as part of the Quantum Delta NL visitor programme.
PV was supported by the Ministry of Culture and Innovation of Hungary from the National Research, Development and Innovation Fund, financed under the FK~146643 funding scheme, by the János Bolyai Research Scholarship of the Hungarian Academy of Sciences, and by the Ministry of Culture and Innovation and the National Research, Development and Innovation Office within the Quantum Information National Laboratory of Hungary (Grant No.~2022-2.1.1-NL-2022-00004).

\paragraph{Formalisation.} 
The results of this paper are formalized in Lean:
\url{https://github.com/spectra-research/asymptotic-tensor-rank-semicontinuity.git}

\bibliographystyle{alphaurl}
\bibliography{refs}

\end{document}